\documentclass[11pt]{article}
\usepackage{calc}
\textheight=21.5cm \textwidth=15cm
\evensidemargin=5mm
\oddsidemargin=5mm
\frenchspacing \linespread{1.0}
\usepackage[centertags]{amsmath}
\usepackage{latexsym}
\usepackage{amsfonts}
\usepackage{amssymb}
\usepackage{graphicx}
\usepackage{amsthm}
\usepackage{fancyhdr}
\usepackage [dvips]{epsfig}

\rhead{\textbf{\thepage}} 
\lhead{\textsl{\leftmark}}
\lfoot{ \footnotesize{\texttt{}}}
\usepackage{newlfont}
\usepackage[latin1]{inputenc}
\usepackage[french,english]{babel}
\usepackage{graphicx}
\usepackage{t1enc}
\usepackage[matrix,arrow,frame,curve]{xy}
\theoremstyle{plain}

\newtheorem{prop}{Proposition}[section]
\newtheorem{theo}{Theorem}[section]
\newtheorem{definition}{Definition}[section]
\newtheorem{cor}{Corollary}[section]
\newtheorem{rem}{Remark}[section]

\theoremstyle{plain}

\title{Integer solutions of integral inequalities and $H$-invariant Jacobian Poisson structures}

\author{ G. Ortenzi \footnote{
 Dipartimento di Matematica Pura e Applicazioni,
  Universit\`a degli Milano Bicocca
  Via R.Cozzi, 53
 20125, Milano, Italia
  E-mail address: giovanni.ortenzi@unimib.it},
    V. Rubtsov \footnote{
  Laboratoire Angevin de Recherche en Math\'ematiques
  Universit\'e D'Angers, D\'epartement de Math\'ematiques
  2, boulevard Lavoisier, 49045 Angers, France
  E-mail address: Volodya.Roubtsov@univ-angers.fr },
  S.R. Tagne Pelap\footnote{
  Mathematics Research Unit at Luxembourg, University of Luxembourg,
  6 rue Richard Coudenhove-Kalergi, L-1359 Luxembourg City, Grand-Duchy of Luxembourg,
  E-mail address: serge.pelap@uni.lu}}

\begin{document}

\maketitle

\abstract{We study the Jacobian Poisson structures in any dimension invariant with respect to the discrete Heisenberg group. The classification problem is related to the discrete volume of suitable solids. Particular attention is given to dimension $3$ whose simplest example is the Artin-Schelter-Tate Poisson tensors.}
\section*{Introduction}
This paper continues the author's  program of studies the Heisenberg invariance properties of polynomial Poisson algebras which were started in \cite{odru} and extended in \cite{ORT1, ORT2}. Formally speaking, we consider the polynomials in $n$ variables $\mathbb C[x_0, x_1,\cdots, x_{n-1}]$ over $\mathbb C$ and the action of some subgroup $H_n$ of $GL_n(\mathbb C)$ generated by the shifts operators $x_i\longrightarrow x_{i+1}$ $(mod {\mathbb Z}_n)$ and by the operators $x\longrightarrow \varepsilon^ix_i,$ where $\varepsilon^n=1.$ We are interested in the polynomial Poisson brackets on $\mathbb C[x_0, x_1,\cdots, x_{n-1}]$ which are "stable" under this actions ( we will give more precise definition below).\\
The most famous examples of the Heisenberg invariant polynomial Poisson structures are the Sklyanin-Odesskii-Feigin -Artin-Tate quadratic Poisson brackets known also as the elliptic Poisson structures. One can also think about these algebras like the"quasi-classical limits" of elliptic Sklyanin associative algebras. These is a class of Noetherian graded associative algebras which are Koszul, Cohen-Macaulay and have the same Hilbert function
as a polynomial ring with $n$ variables. The above mentioned Heisenberg group action provides the automorphisms
of Sklyanin algebras which are compatible with the grading and defines an $H_n$-action
on the elliptic quadratic Poisson structures on ${\mathbb P}^n$. The latter are identified with Poisson structures on some moduli
spaces of the degree $n$ and rank $k + 1$ vector bundles with parabolic structure (=
the flag $0\subset F\subset {\mathbb C}^{k+1}$ on the elliptic curve $\cal E$). We will denote this elliptic Poisson
algebras by $q_{n;k}(\cal E)$. The algebras $q_{n;k}(E)$ arise in the Feigin-Odesskii "deformational"
approach and form a subclass of polynomial Poisson structures. A comprehensive review of elliptic algebras can be found in (\cite{ode3}) to which we
refer for all additional information. We will mention only that as we have proved in \cite{ORT2} all elliptic Poisson algebras (being in particular Heisenberg-invariant) are unimodular.

Another interesting class of polynomial Poisson structures consists of so-called Jacobian Poisson structures (JPS). These structures are a special case of Nambu-Poisson structures. Their rank is two and the Jacobian Poisson bracket $\left\{P,Q\right\}$ of two polynomials $P$ and $Q$ is given by the determinant of Jacobi matrix of functions $(P, Q, P_1, . . . , P_{n-2}).$ The polynomials $P_i, 1\leq i \leq n-2$ are Casimirs of the bracket and
under some mild condition of independence are generators of the centrum for the Jacobian Poisson algebra structure on $\mathbb C[x_0,\ldots,x_{n-1}]$. This type of Poisson algebras was intensively studied (due to their natural origin and relative simplicity) in a huge number of publications among which we should mention \cite{tak},\cite{nam}\cite{khi1},\cite{khi2}, \cite{pelap1} and \cite{odru}.

There are some beautiful intersections between two described types of polynomial Poisson structures: when we are restricting ourselves to the class of quadratic Poisson brackets then there are only Artin-Schelter-Tate ($n=3$) and Sklyanin ($n=4$) algebras which are both elliptic and Jacobian. It is no longer true for $n>4$. The relations between the Sklyanin Poisson algebras $q_{n,k}(\cal E)$ whose centrum has dimension 1 (for $n$ odd) and 2 (for $n$ even) in the case $k=1$ and is generated by $l = {\mathfrak gcd}(n, k + 1)$ Casimirs for $q_{n,k}(\cal E)$ for $k>1$ are in general quite obscure. We can easily found that sometimes the JPS structures correspond to some degenerations of the Sklyanin elliptic algebras. One example of such JPS
for $n=5$ was remarked in \cite{khi2} and was attributed to so-called Briesckorn-Pham polynomials for $n=5$:
$$P_1 = \sum_{i=0}^4\alpha_i x_i;\,  P_2 = \sum_{i=0}^4\beta_i x_i^2;\, P_3 = \sum_{i=0}^4\gamma_i x_i^2.$$
It is easy to check that the homogeneous quintic $P=P_1P_2P_3$ (see sect. 3.2) defines a Casimir for  some "rational degeneration of (one of)
elliptic algebras $q_{5,1}(\cal E)$ and $q_{5,2}(\cal E)$ if it satisfies the $H-$invariance condition.

In this paper we will study the Jacobian Poisson structures in any number of variables which are Heisenberg-invariant and we relate all such structures to some graded sub-vector space $\mathcal H$ of polynomial algebra. This vector space is completely determined by some enumerative problem of a number-theoretic type. More precisely, the
homogeneous subspace $\mathcal H_i$ of $\mathcal H$ of degree $i$ is in bijection with integer solutions of a system of Diophant inequalities. Geometric interpretation of the dimension of $\mathcal H_i$ is described in terms of integer points in a convex polytope given by this Diophant system. In the special case of dimension 3, $\mathcal H$ is a subalgebra of polynomial algebra with 3 variables and all JPS are given by this space. We solve explicitly the enumerative problem in this case  and obtain a complete classification of the H-invariant not necessarily quadratic Jacobian Poisson algebras with three generators. As a by product we explicitly compute the Poincar\'e series of $\mathcal H.$ In this dimension we observe that the H-invariant JPS of degree 5 is given by the Casimir sextic
\begin{equation*}
\begin{split}
P^{\vee}=&\frac{1}{6}a(x_0^6+x_1^6+x_2^6)+\frac{1}{3}b(x_0^3x_1^3+x_0^3x_2^3+ x_1^3x_2^3)+c(x_0^4x_1x_2+x_0x_1^4x_2+x_0x_1x_2^4)\\ &+\frac{1}{2}dx_0^2x_1^2x_2^2
\end{split}
\end{equation*}
$a, b, c, d\in\mathbb C.$ This structure is a "projectively dual" to the Artin-Schelter-Tate elliptic Poisson structure which is the $H$-invariant JPS given by the cubic
\begin{equation*}
P=(x_0^3+x_1^3+x_2^3)+\gamma x_0x_1x_2
\end{equation*}
where $\gamma\in\mathbb C.$
In fact the algebraic variety ${\mathcal E}^{\vee}: \{P^{\vee}=0\}\in \mathbb P^2$ is the (generically) projectively dual to the elliptic curve ${\mathcal E}: \{P=0\}\subset \mathbb P^2$.

The paper is organized as follows: in Section 1 we remind a definition of the Heisenberg group in the Schroedinger representation and describe its action on Poisson polynomial tensors and also the definition of JPS. In Section 2 we treat the above mentioned enumerative problem in dimension 3. The last section concerns the case of any dimension. Here we discuss some possible approaches to the general enumerative question.

\section{Preliminary facts}
Throughout of this paper, $K$ is a field of characteristic zero. Let us start by remaining some elementary notions of the Poisson geometry.
\subsection{Poisson algebras and Poisson manifold}
 Let $\mathcal R$ be a commutative $K$-algebra. One says that $\mathcal R$ is a Poisson algebra if $\mathcal R$ is endowed with a Lie bracket, indicated with $\{ \cdot , \cdot \}$, which is also a biderivation.
One can also say that $\mathcal R$ is endowed with a Poisson structure and therefore the bracket $\{\cdot, \cdot\}$ is called the Poisson bracket.
Elements of the center are called Casimirs: $a\in\mathcal R$ is a Casimir if $\{a,b\}=0$ for all $b\in\mathcal R.$\\
A Poisson manifold $M$ (smooth, algebraic,...) is  a manifold whose  function algebra $\mathcal A$ ($C^\infty(M)$, regular etc.) is endowed with a Poisson bracket.\\
As examples of Poisson structures let us consider a particular subclass of Poisson structures which are uniquely characterized by their Casimirs.  In the dimension $4$ let
$$q_1=\frac{1}{2}(x_0^2 + x_2^2)+kx_1x_3,$$
$$q_2=\frac{1}{2}(x_1^2 + x_3^2)+kx_0x_2,$$
be two elements of $\mathbb C[x_0, x_1, x_2, x_3]$ where $k\in\mathbb C.$\\
On $\mathbb C[x_0, x_1, x_2, x_3]$ a Poisson structure $\pi$ is defined by the formula:
$$\{f, g\}_{\pi}:=\frac{df\wedge dg\wedge dq_1\wedge dq_2}{dx_0\wedge dx_1\wedge dx_2\wedge dx_3}$$
or more explicitly (mod $\mathbb Z_4$):
\begin{eqnarray*}
\{x_i, x_{i+1}\}&=&k^2x_ix_{i+1}-x_{i+2}x_{i+3}, \\
\{x_i, x_{i+2}\}&=&k(x_{i+3}^2-x_{i+1}^2),\ i=0, 1, 2, 3.
\end{eqnarray*}
Sklyanin had introduced this Poisson algebra which carried today his name in a Hamiltonian approach to the continuous and discrete integrable Landau-Lifshitz models (~\cite{sky1}, ~\cite{sky2}). He showed that the Hamiltonian structure of the classical model is completely determined by two  quadratic "Casimirs". The Sklyanin Poisson algebra is also called elliptic since its relations with an elliptic curve. The elliptic curve enters in the game from the  geometric side. The symplectic foliation of Sklyanin's structure is too complicated. This is because the structure is degenerated and looks quite different from a symplectic one. But the intersection locus of two Casimirs in the affine space of dimension four (one can consider also the projective situation) is an elliptic curve $\mathcal E$ given by two quadrics $q_{1,2}.$ We can think about this curve $\mathcal E$ as a complete intersection of the couple $q_1=0$, $q_2=0$ embedded in $\mathcal CP^3$ (as it was observed in Sklyanin's initial paper).\\
A possible generalization one can be obtained considering $n-2$ polynomials $Q_i$ in $K^n$ with coordinates $x_i$, $i=0,...,n-1.$ We can define a bilinear differential operation:
$$\{,\} : K[x_1,...,x_n]\otimes K[x_1,...,x_n]\longrightarrow K[x_1,...,x_n]$$
by the formula
\begin{equation}\label{jps}
\{f,g\}=\frac{df\wedge dg\wedge dQ_1\wedge...\wedge dQ_{n-2}}{dx_1\wedge dx_2\wedge...\wedge dx_n},\space\space f,g\in K[x_1,...,x_n]
 \end{equation}
This operation, which gives a Poisson algebra structure on $K[x_1,...,x_n]$, is called a Jacobian Poisson structure (JPS), and it is a partial case of more general $n-m$-ary Nambu operation given by an  antisymmetric $n-m$-polyvector field introduced by Y.Nambu \cite{nam} and was extensively studied by L. Takhtajan \cite{tak}.\\
The polynomials $Q_i, i=1,...,n-2$ are Casimir functions for the brackets (\ref{jps}).\\ \mbox{} \\
There exists a second generalization of the Sklyanin algebra that we will describe briefly in the next sub-section (see for details \cite{ode3}).
\subsection{Elliptic Poisson algebras $q_{n}(\mathcal E, \eta)$ and $q_{n, k}(\mathcal E, \eta)$}
(We report here this subsection from \cite{ORT1} for sake of self-consistency).\\
 These algebras, defined by Odesskii and Feigin, arise as quasi-classical limits of elliptic associative algebras $Q_n(\mathcal E, \eta)$ and $Q_{n, k}(\mathcal E, \eta)$ \cite{odfe1, ode1}.\\
Let $\Gamma=\mathbb Z+\tau\mathbb Z\subset\mathbb C,$ be an integral lattice generated by $1$ and $\tau\in\mathbb C,$ with $\textnormal{Im}\tau > 0.$
Consider the elliptic curve $\mathcal E=\mathbb C/\Gamma$ and a point $\eta$ on this curve.\\
In their article \cite{ode1}, given $k< n,$ mutually prime, Odesskii and Feigin construct an algebra, called  elliptic, $Q_{n, k}(\mathcal E, \eta),$ as an algebra defined by $n$ generators  $\{x_i, i\in\mathbb Z/n\mathbb Z\}$ and the following relations
 \begin{equation}
  \displaystyle\sum_{r\in\mathbb{Z}/n\mathbb{Z}}\frac{\theta_{j-i+r(k-1)}(0)}
 {\theta_{kr}(\eta)\theta_{j-i-r}(-\eta)}x_{j-r}x_{i+r}=0, \ \ i\neq j, i, j\in\mathbb Z/n\mathbb Z
 \end{equation}
 where $\theta_{\alpha}$ are theta functions \cite{ode1}.
\\
 These family of algebras has the following properties :
 \begin{enumerate}
   \item The center of the algebra $Q_{n, k}(\mathcal E, \eta),$ for generic $\mathcal E$ and $\eta,$ is the algebra of polynomial of $m=pgcd(n, k+1)$ variables of degree $n/m;$
   \item $Q_{n, k}(\mathcal E, 0)=\mathbb C[x_1, \cdots, x_n]$ is commutative;
   \item $Q_{n, n-1}(\mathcal E, \eta)=\mathbb C[x_1, \cdots, x_n]$ is commutative for all $\eta$;
   \item $Q_{n, k}(\mathcal E, \eta)\simeq Q_{n, k'}(\mathcal E, \eta),$ if $kk'\equiv 1$ (mod $n$);
   \item the maps $x_i\mapsto x_{i+1}$ et $x_i\mapsto \varepsilon^ix_i$, where $\varepsilon^n=1,$ define  automorphisms of the algebra $Q_{n, k}(\mathcal E, \eta)$;
   \item the algebras $Q_{n,k}(\mathcal E, \eta)$ are deformations of polynomial algebras. The associated Poisson structure is denoted by $q_{n,k}(\mathcal E, \eta)$;
   \item among the algebras $q_{n,k}(\mathcal E, \eta),$ only $q_{3}(\mathcal E, \eta)$ (the Artin-Schelter-Tate algebra) and the Sklyanin algebra $q_{4}(\mathcal E, \eta)$ are Jacobian Poisson structures.
 \end{enumerate}
\subsection{The Heisenberg invariant Poisson structures}
\subsubsection{The G-invariant Poisson structures}
Let $G$ be a group acting on a Poisson algebra $\mathcal R.$
\begin{definition}
A Poisson bracket $\{\cdot, \cdot\}$ on $\mathcal R$ is said to be a $G$-invariant if $G$ acts on $\mathcal R$ by Poisson automorphisms.\\
In other words, for every $g\in G$ the morphism $\varphi_g : \mathcal R\longrightarrow\mathcal R, a\mapsto g\cdot a$ is an automorphism and the following diagram is a commutative:
$$\xymatrix{\mathcal R\times\mathcal R \displaystyle{\ar[r]^{\varphi_g\times \varphi_g}}\displaystyle{\ar[d]_{\{\cdot, \cdot\}}}&
\mathcal R\times\mathcal R\displaystyle{\ar[d]^{\{\cdot, \cdot\}}}\\
\mathcal R \displaystyle{\ar[r]^{\varphi_g}}&\mathcal R}.$$
\end{definition}
\subsubsection{The H-invariant Poisson structures}
In their paper \cite{ORT1}, the authors introduced the notion of $H$-invariant Poisson structures.
That is a special case of a $G$-invariant structure when $G$ in the finite Heisenberg Group and $\mathcal R$ is the polynomial algebra.
  Let us remind this notion.\\
Let $V$ be a complex vector space of dimension $n$ and $e_0,\cdots, e_{n-1}$ - a basis of $V$. Take the $n-$th primitive root of unity $\varepsilon=e^{\frac{2\pi i}{n}}.$\\
Consider $\sigma, \tau$ of $GL(V)$ defined by:
$$\sigma(e_m)=e_{m-1};$$
$$\tau(e_m)=\varepsilon^me_m.$$
The  Heisenberg of dimension $n$ is nothing else that the subspace $H_n\subset GL(V)$ generated by $\sigma$ and $\tau.$\\
From now on, we assume $V=\mathbb C^n,$ with $x_0, x_1,\cdots, x_{n-1}$ as a basis and consider the coordinate ring $\mathbb C[x_0,x_1,\cdots, x_{n-1}].$\\
Naturally, $\sigma$ and $\tau$ act by automorphisms on the algebra $\mathbb C[x_0,x_1,\cdots, x_{n-1}]$ as follows:
\begin{eqnarray*}
\sigma\cdot(\alpha x_0^{\alpha_0}x_1^{\alpha_1}\cdots x_{n-1}^{\alpha_{n-1}})=\alpha x_0^{\alpha_{n-1}}x_1^{\alpha_0}\cdots x_{n-1}^{\alpha_{n-2}};\\
\tau\cdot(\alpha x_0^{\alpha_0}x_1^{\alpha_1}\cdots x_{n-1}^{\alpha_{n-1}})=\varepsilon^{\alpha_1+2\alpha_2+\cdots+(n-1)\alpha_{n-1}}\alpha x_0^{\alpha_0}x_1^{\alpha_1}\cdots x_{n-1}^{\alpha_{n-1}}.
\end{eqnarray*}
We introduced in \cite{ORT1} the notion of $\tau$-degree on the polynomial algebra  $\mathbb C$[$x_0$,$x_1$,$\cdots$, $x_{n-1}$]. The $\tau$-degree of a monomial $M=\alpha x_0^{\alpha_0}x_1^{\alpha_1}\cdots x_{n-1}^{\alpha_{n-1}}$ is the positive integer $\alpha_1+2\alpha_2+\cdots+(n-1)\alpha_{n-1}\in\mathbb Z/n\mathbb Z$  if $\alpha\neq0$ and $-\infty$ if not. The $\tau$-degree of $M$ is denoted $\tau \varpi(M)$. A $\tau$-degree of a polynomial is the highest $\tau$-degree of its monomials.\\ \mbox{} \\
For simplicity the $H_n$-invariance condition will be referred from now on just as $H$-invariance. An $H$-invariant Poisson bracket on $\mathcal A=\mathbb C[x_0,x_1,\cdots, x_{n-1}]$ is nothing but a bracket on $\mathcal A$ which satisfy the following:
$$\{x_{i+1}, x_{j+1}\}=\sigma\cdot\{x_i, x_j\}$$
 $$\tau\cdot\{x_i, x_j\}=\varepsilon^{i+j}\{x_i, x_j\},$$
for all $i, j\in\mathbb Z/n\mathbb Z.$\\ \mbox{} \\
The $\tau$ invariance is, in some sense, a ''discrete'' homogeneity.
\begin{prop}\cite{ORT1}
The Sklyanin-Odesskii-Feigin Poisson algebras $q_{n,k}(\mathcal E)$ are $H$-invariant Poisson algebras.
\end{prop}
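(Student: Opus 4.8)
The plan is to reduce the statement to the single fact recorded as property (5) of the family $Q_{n,k}(\mathcal E,\eta)$: the shift $\sigma\colon x_i\mapsto x_{i+1}$ and the scaling $\tau\colon x_i\mapsto\varepsilon^i x_i$ are automorphisms of the associative algebra $Q_{n,k}(\mathcal E,\eta)$ for every value of $\eta$. First I would observe that, by the definition of $H$-invariance given above, it suffices to prove that $\sigma$ and $\tau$ act on $q_{n,k}(\mathcal E)$ by Poisson automorphisms. Indeed, if $\sigma\{x_i,x_j\}=\{\sigma x_i,\sigma x_j\}=\{x_{i+1},x_{j+1}\}$ then the first invariance relation $\{x_{i+1},x_{j+1}\}=\sigma\cdot\{x_i,x_j\}$ holds, and if $\tau\{x_i,x_j\}=\{\tau x_i,\tau x_j\}=\{\varepsilon^i x_i,\varepsilon^j x_j\}=\varepsilon^{i+j}\{x_i,x_j\}$ then the second relation $\tau\cdot\{x_i,x_j\}=\varepsilon^{i+j}\{x_i,x_j\}$ holds. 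Since $H_n$ is generated by $\sigma$ and $\tau$ and a composite of Poisson automorphisms is again one, establishing these two equalities shows that all of $H_n$ acts by Poisson automorphisms.

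The core of the argument is the passage to the quasi-classical limit. By property (6), $q_{n,k}(\mathcal E)$ is the Poisson structure associated to the flat deformation $Q_{n,k}(\mathcal E,\eta)$ of the commutative polynomial algebra $\mathbb C[x_1,\dots,x_n]=Q_{n,k}(\mathcal E,0)$ (property (2)); concretely the bracket is the leading term $\{f,g\}=\frac{\partial}{\partial\eta}\big|_{\eta=0}\bigl(f\ast_\eta g-g\ast_\eta f\bigr)$, where $\ast_\eta$ is the product of $Q_{n,k}(\mathcal E,\eta)$. The decisive point is that $\sigma$ and $\tau$ are given on generators by formulas that do not involve $\eta$, so they act by one and the same linear transformation of the underlying vector space for every member of the family, and by property (5) each such transformation intertwines the products, $\sigma(f\ast_\eta g)=\sigma f\ast_\eta\sigma g$ and likewise for $\tau$. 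I would then apply $\sigma$ (respectively $\tau$) to the defining formula for the bracket and pull it through the commutator and the $\eta$-derivative, using its $\eta$-independence and linearity, to obtain $\sigma\{f,g\}=\{\sigma f,\sigma g\}$ (resp.\ the analogue for $\tau$). This delivers exactly the two equalities required in the first step.

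The step that requires care — and that I expect to be the main technical obstacle — is the justification that applying the constant operators $\sigma$ or $\tau$ genuinely commutes with the operation $\frac{\partial}{\partial\eta}\big|_{\eta=0}$ that extracts the bracket. This is clean precisely because the automorphisms act through a fixed, $\eta$-independent linear map, so one is differentiating a family of products to which a constant operator is applied; one should nonetheless record that the identification of the underlying vector spaces of the fibers $Q_{n,k}(\mathcal E,\eta)$ is compatible with this action, i.e.\ that the family is $H_n$-equivariantly flat, so that the commutator and its leading $\eta$-coefficient are well defined and $H_n$-equivariant. Once this compatibility is noted the remaining manipulations are formal. An alternative, purely computational route would be to write down the explicit brackets $\{x_i,x_j\}$ of $q_{n,k}(\mathcal E)$ coming from the Odesskii--Feigin relations and verify the two $H$-invariance identities directly from the quasi-periodicity of the theta functions $\theta_\alpha$; this bypasses the limiting argument but trades it for a longer theta-function calculation, so I would retain it only as a fallback.
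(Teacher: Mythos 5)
The paper does not actually prove this proposition: it is imported verbatim from \cite{ORT1} with a citation and no argument, so there is no in-paper proof to compare against. Judged on its own terms, your argument is correct and is the natural one: $H$-invariance as defined here is exactly the statement that $\sigma$ and $\tau$ are Poisson automorphisms (your two displayed identities on generators suffice because the bracket is a biderivation and $H_n$ is generated by $\sigma$ and $\tau$), and this follows by pushing the $\eta$-independent linear action through the commutator and the first-order term in $\eta$, using property (5) that $\sigma$ and $\tau$ are automorphisms of $Q_{n,k}(\mathcal E,\eta)$ for every $\eta$ and property (6) that $q_{n,k}(\mathcal E)$ is the associated quasi-classical limit. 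You also correctly isolate the one point that genuinely needs care, namely that the identification of the underlying graded vector spaces of the fibers $Q_{n,k}(\mathcal E,\eta)$ with the polynomial algebra must be $H_n$-equivariant (this is guaranteed by the flatness of the family, i.e.\ the constancy of the Hilbert function, together with the fact that $\sigma$ and $\tau$ are defined by $\eta$-independent formulas on the degree-one part). Your fallback via the explicit Odesskii--Feigin relations is in fact quite cheap at the level of the associative algebra, since the structure constants depend on $i$ and $j$ only through $j-i$ (giving $\sigma$-invariance) and every monomial $x_{j-r}x_{i+r}$ in the $(i,j)$ relation has $\tau$-weight $i+j$ (giving $\tau$-invariance); so either route closes the argument.
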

Therefore an $H$-invariant Poisson structures on the polynomial algebra $\mathcal R$ includes as the Sklyanin Poisson algebra or more generally of the Odesskii-Feigin Poisson algebras.\\
In this paper, we are interested by the intersection of the two classes of generalizations of Artin-Shelter-Tate- Sklyanin Poisson algebras: JPS and $H$-invariant Poisson structures.
\begin{prop}\cite{ORT1}\label{degcond}
If $\{\cdot, \cdot\}$ is an $H$-invariant polynomial Poisson bracket, the usual polynomial degree of the monomial of $\{x_i, x_j\}$ equals to $2+sn,\ s \in\mathbb N.$
\end{prop}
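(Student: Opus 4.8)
The plan is to deduce the total-degree condition not from $\sigma$ and $\tau$ individually but from a single, well-chosen element of $H_n$ that the two generators produce together. First I would observe that, since $H_n$ is by definition the subgroup of $GL(V)$ generated by $\sigma$ and $\tau$, it also contains their commutator $h=\tau\sigma\tau^{-1}\sigma^{-1}$. A direct computation on the generators, using $\sigma\cdot x_i=x_{i+1}$ and $\tau\cdot x_i=\varepsilon^i x_i$, shows that $h$ acts as the scalar automorphism $h\cdot x_i=\varepsilon\, x_i$ for every $i$; in other words the central homothety $\varepsilon\,\mathrm{Id}$ already lies in $H_n$. Since $h$ is an algebra automorphism scaling each generator by $\varepsilon$, it acts on an arbitrary monomial $M$ by $h\cdot M=\varepsilon^{\deg M}M$, where $\deg M$ is the usual total degree.

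The heart of the proof is then to feed $h$ into the $H$-invariance condition. Because the bracket is $H$-invariant, $h$ is a Poisson automorphism, so the commuting diagram gives $\{h\cdot x_i,h\cdot x_j\}=h\cdot\{x_i,x_j\}$. Bilinearity of the bracket turns the left-hand side into $\varepsilon^2\{x_i,x_j\}$. Writing $\{x_i,x_j\}=\sum_M c_M M$ as a sum over monomials, the right-hand side is $\sum_M c_M\,\varepsilon^{\deg M}M$. Since distinct monomials are linearly independent in $\mathbb C[x_0,\dots,x_{n-1}]$, comparing coefficients yields $\varepsilon^{\deg M}=\varepsilon^{2}$ for every $M$ with $c_M\neq 0$, that is $\deg M\equiv 2 \pmod n$. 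Hence $\deg M=2+sn$ for some integer $s$, and because $\deg M\ge 0$ (indeed $\deg M\ge 2$ once $n\ge 3$, as a constant monomial cannot satisfy the congruence) we get $s\in\mathbb N$.

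I expect the only genuine obstacle to be the very first step: realizing that the scalar $\varepsilon\,\mathrm{Id}$ is forced into $H_n$ as the commutator of $\sigma$ and $\tau$. This is exactly what upgrades the information carried by $\tau$ alone — which only controls the $\tau$-degree $\sum_k k\alpha_k \bmod n$ of each monomial and says nothing about its total degree — into an honest congruence on the total degree. Everything after that is routine bookkeeping with bilinearity and the linear independence of monomials. As a complementary remark, invoking the $\tau$-invariance $\tau\cdot\{x_i,x_j\}=\varepsilon^{i+j}\{x_i,x_j\}$ in the same way simultaneously pins down $\sum_k k\alpha_k\equiv i+j\pmod n$ for each monomial, so both gradings are controlled at once; only the total-degree congruence is needed for the present statement.
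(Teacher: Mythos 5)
Your proof is correct. Note that the paper itself gives no proof of this proposition --- it is quoted from \cite{ORT1} --- so there is nothing internal to compare against; but your argument is the natural one and is sound. The key step checks out: with $\sigma\cdot x_i=x_{i+1}$ and $\tau\cdot x_i=\varepsilon^i x_i$ one computes $\tau\sigma\tau^{-1}\sigma^{-1}\cdot x_i=\varepsilon^{-(i-1)}\varepsilon^{i}x_i=\varepsilon x_i$, so the homothety $\varepsilon\,\mathrm{Id}$ lies in $H_n$, and feeding it into the Poisson-automorphism condition gives $\deg M\equiv 2\pmod n$ for every monomial $M$ of $\{x_i,x_j\}$, hence $\deg M=2+sn$ with $s\in\mathbb N$ since $n\geq 3$ rules out $\deg M<2$. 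An equivalent bookkeeping version of the same idea, which avoids naming the commutator explicitly, is to note that $\tau\varpi(\sigma\cdot M)\equiv\tau\varpi(M)+\deg M\pmod n$ and to compare the $\tau$-invariance constraints for $\{x_i,x_j\}$ and $\{x_{i+1},x_{j+1}\}=\sigma\cdot\{x_i,x_j\}$; this yields the same congruence and is presumably what \cite{ORT1} does, but the two arguments carry identical content.
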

\begin{prop}\cite{ORT1}\label{Hder}
Let $P\in \mathcal R =\mathbb C[x_0,\cdots, x_{n-1}].$ \\
For all $i\in\{0, \cdots, n-1\},$ $$\sigma\cdot\frac{\partial F}{\partial x_i}= \frac{\partial(\sigma\cdot F)}{\partial(\sigma\cdot x_i)}.$$
\end{prop}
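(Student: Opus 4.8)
The plan is to reduce the identity to a statement about the generators by exploiting the derivation-like structure of both sides. Writing $\partial_i$ for $\partial/\partial x_i$, I would first read off from the monomial action of $\sigma$ that on generators $\sigma\cdot x_j = x_{j+1}$ (indices mod $n$): the monomial $x_j$ has exponent $1$ in slot $j$, and $\sigma$ sends the exponent in slot $k-1$ to slot $k$, so the single exponent lands in slot $j+1$. In particular $\sigma\cdot x_i = x_{i+1}$, so the right-hand side of the claim is $\partial(\sigma\cdot F)/\partial x_{i+1}$ and the assertion becomes the operator identity $\sigma\circ\partial_i = \partial_{i+1}\circ\sigma$ on $\mathcal R$.

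Next I would observe that each side, viewed as a $\mathbb C$-linear operator $\mathcal R\to\mathcal R$, is a $\sigma$-twisted derivation. Indeed, since $\sigma$ is an algebra automorphism and $\partial_i$ an ordinary derivation, for all $f,g\in\mathcal R$ one has $\sigma(\partial_i(fg)) = \sigma(\partial_i f)\,\sigma(g) + \sigma(f)\,\sigma(\partial_i g)$, and likewise $\partial_{i+1}(\sigma(fg)) = \partial_{i+1}(\sigma f)\,\sigma(g) + \sigma(f)\,\partial_{i+1}(\sigma g)$. Thus both $\sigma\circ\partial_i$ and $\partial_{i+1}\circ\sigma$ obey the same twisted Leibniz rule $D(fg) = D(f)\,\sigma(g) + \sigma(f)\,D(g)$. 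Because any such $\sigma$-derivation is determined by its values on the generators $x_0,\dots,x_{n-1}$ (repeatedly applying the twisted rule to a monomial expresses the result purely in terms of the $D(x_j)$ and the $\sigma(x_j)$), it suffices to check the identity on each $x_j$.

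Finally I would verify the generator case directly: $\sigma\cdot\partial_i(x_j) = \sigma\cdot\delta_{ij} = \delta_{ij}$ since $\sigma$ fixes scalars, while $\partial_{i+1}(\sigma\cdot x_j) = \partial_{i+1}(x_{j+1}) = \delta_{i+1,\,j+1} = \delta_{ij}$, and the two agree for all $i,j$; this closes the argument. An equivalent route is to check $\sigma\cdot\partial_i M = \partial_{i+1}(\sigma\cdot M)$ on an arbitrary monomial $M = x_0^{\alpha_0}\cdots x_{n-1}^{\alpha_{n-1}}$ by direct computation, using that $\sigma$ carries the exponent in slot $i$ to slot $i+1$, and then to extend by $\mathbb C$-linearity. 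I do not expect a genuine obstacle here: the content is simply that partial derivatives transform covariantly under the invertible linear change of variables induced by $\sigma$, and the only point demanding care is consistent bookkeeping of the cyclic index shift mod $n$, in particular confirming $\sigma\cdot x_i = x_{i+1}$ rather than $x_{i-1}$.
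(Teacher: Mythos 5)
Your proof is correct. Note that the paper itself gives no proof of this proposition: it is imported verbatim from \cite{ORT1}, so there is nothing to compare against line by line. Your identification $\sigma\cdot x_j = x_{j+1}$ agrees with the monomial formula for the $\sigma$-action given in Section 1.3.2, the reduction of the claim to the operator identity $\sigma\circ\partial_i=\partial_{i+1}\circ\sigma$ is the right reformulation, and the twisted-derivation argument (both sides satisfy $D(fg)=D(f)\,\sigma(g)+\sigma(f)\,D(g)$, hence are determined by their values $\delta_{ij}$ on generators) closes the gap cleanly; the direct monomial computation you mention as an alternative is the more immediate, and equally valid, route.
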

\section{$H$-invariant JPS in dimension 3}
We consider first a generalization of Artin-Schelter-Tate quadratic Poisson algebras.
Let $\mathcal R=\mathbb C[x_0, x_1, x_2]$ be the polynomial algebra with 3 generators. For every $P\in\mathcal A$
we have a JPS $\pi(P)$ on $\mathcal R$ given by the formula:
$$\{x_i, x_j\}=\frac{\partial P}{\partial x_k}$$
where $(i, j, k)\in{\mathbb Z}/{3\mathbb Z}$ is a cyclic permutation of $(0, 1, 2).$ Let $\mathcal H$ the set of all $P\in\mathcal A$
such that $\pi(P)$ is an $H$-invariant Poisson structure.
\begin{prop}
If $P\in\mathcal H$ is a homogeneous polynomial, then $\sigma\cdot P=P$ and $\tau\varpi(P)=0.$
\end{prop}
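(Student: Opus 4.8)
The plan is to translate the two defining conditions of $H$-invariance of the bracket $\pi(P)$ directly into conditions on the single polynomial $P$, using the Jacobian formula $\{x_i,x_j\}=\partial P/\partial x_k$ together with Proposition \ref{Hder} and the explicit actions of $\sigma$ and $\tau$ on monomials. For $n=3$ one has $\sigma\cdot x_i=x_{i+1}$ and $\tau\cdot x_i=\varepsilon^i x_i$ (indices mod $3$), and the three cyclic instances of the JPS read $\{x_0,x_1\}=\partial_{x_2}P$, $\{x_1,x_2\}=\partial_{x_0}P$, $\{x_2,x_0\}=\partial_{x_1}P$.

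I would treat the $\sigma$-part first. Feeding the three instances above into the invariance relation $\{x_{i+1},x_{j+1}\}=\sigma\cdot\{x_i,x_j\}$ and pushing $\sigma$ through the derivative by Proposition \ref{Hder} (so that $\sigma\cdot\partial_{x_m}P=\partial(\sigma\cdot P)/\partial x_{m+1}$), each of the pairs $(i,j)=(0,1),(1,2),(2,0)$ produces an identity of the form $\partial_{x_\ell}P=\partial_{x_\ell}(\sigma\cdot P)$, with $\ell=0,1,2$ respectively. Hence all three first partials of $Q:=P-\sigma\cdot P$ vanish, so $Q$ is constant (this is where characteristic zero enters). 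Since $\sigma$ merely permutes the variables it preserves total degree, so $Q$ is homogeneous of the same degree as $P$; a homogeneous polynomial of positive degree that is also constant must be $0$ (the degree-$0$ case is trivial). This yields $\sigma\cdot P=P$.

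For the $\tau$-part the key computational fact is the behaviour of the $\tau$-degree under differentiation: for a monomial $M=x_0^{a_0}x_1^{a_1}x_2^{a_2}$ one has $\tau\varpi(\partial_{x_k}M)\equiv\tau\varpi(M)-k\pmod 3$, so that $\tau\cdot\partial_{x_k}M=\varepsilon^{\tau\varpi(M)-k}\,\partial_{x_k}M$. Imposing $\tau\cdot\{x_i,x_j\}=\varepsilon^{i+j}\{x_i,x_j\}$ on each cyclic instance, and using that $\tau$ acts diagonally on the monomials of $\partial_{x_k}P$, forces every monomial $M$ of $P$ divisible by $x_k$ to satisfy $\tau\varpi(M)-k\equiv i+j\pmod 3$. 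Checking the three cases: for $k=0$ one has $i+j\equiv 1+2\equiv 0$, for $k=1$ one has $i+j\equiv 2+0\equiv 2$, and for $k=2$ one has $i+j\equiv 0+1\equiv 1$; in every case the congruence collapses to $\tau\varpi(M)\equiv 0\pmod 3$. Since each monomial of positive degree is divisible by at least one variable and the constant monomial has $\tau$-degree $0$, every monomial of $P$ has $\tau$-degree $0$, whence $\tau\varpi(P)=0$.

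The obstacles here are organizational rather than conceptual. The one point requiring genuine care is confirming that the monomials of $\partial_{x_k}P$ occur without cancellation (distinct monomials of $P$ divisible by $x_k$ have distinct $x_k$-derivatives), so that the eigenvalue condition for $\tau$ may be read off monomial by monomial. The only other delicate step is arranging the three $\sigma$-identities so that they jointly give the vanishing of all partials of $P-\sigma\cdot P$; once the index shifts (the cyclic assignment of $k$ to $(i,j)$ and $\sigma\cdot x_i=x_{i+1}$) and the reductions modulo $3$ are kept straight, homogeneity finishes the $\sigma$-statement and the $\tau$-degree bookkeeping finishes the other.
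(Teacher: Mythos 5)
Your proposal is correct and follows essentially the same route as the paper: the $\sigma$-part comes from pushing $\sigma$ through the partial derivatives via Proposition \ref{Hder} to equate all first partials of $P$ and $\sigma\cdot P$, and the $\tau$-part comes from reading off $\tau\varpi(P)\equiv i+j+k\equiv 0\pmod 3$ from the eigenvalue condition on $\partial P/\partial x_k$. Your version merely spells out the details (the constancy of $P-\sigma\cdot P$ and the monomial-by-monomial reading of the $\tau$-condition) that the paper leaves implicit.
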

\begin{proof}
Let $(i, j, k)\in(\mathbb Z/3\mathbb Z)^3$ be a cyclic permutation of $(0, 1, 2).$ One has:
\begin{equation}\label{brack3}
\{x_i, x_j\}=\frac{\partial P}{\partial x_k}
\end{equation}

and
\begin{equation}
\{x_{i+1}, x_{j+1}\}=\frac{\partial P}{\partial x_{k+1}}.
\end{equation}
Using the proposition (\ref{Hder}), we conclude that for all $m\in\mathbb Z/3\mathbb Z,$ $\frac{\partial\sigma P}{\partial x_m}=\frac{\partial P}{\partial x_m}.$\\
It gives that $\sigma\cdot P=P.$\\
On the other hand, from  the equation (\ref{brack3}), one has $\tau$-$\varpi(P)\equiv i+j+k \ mod \ 3.$ And we get the second half of the proposition.
\end{proof}
\begin{prop}
$\mathcal H$ is a subalgebra of $\mathcal R.$
\end{prop}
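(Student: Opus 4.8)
The plan is to identify $\mathcal H$ with the full ring of invariants $\mathcal R^{H_3}=\{P\in\mathcal R:\ \sigma\cdot P=P\ \text{and}\ \tau\cdot P=P\}$; once this identification is established the statement is immediate, since the common fixed-point set of a family of algebra automorphisms is always a subalgebra. Indeed $\sigma$ and $\tau$ act on $\mathcal R$ by algebra automorphisms, so if $\sigma\cdot P=P,\ \sigma\cdot Q=Q$ then $\sigma\cdot(PQ)=(\sigma\cdot P)(\sigma\cdot Q)=PQ$, and similarly for $\tau$ and for $\mathbb C$-linear combinations. Thus the entire proof reduces to showing that $P\in\mathcal H$ is equivalent to $\sigma\cdot P=P$ together with $\tau\cdot P=P$.

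First I would note that $\mathcal H$ is a linear subspace: the assignment $P\mapsto\pi(P)$ is $\mathbb C$-linear because $\{x_i,x_j\}=\partial P/\partial x_k$, and the two $H$-invariance relations are linear in the bracket. Next I would rewrite each invariance relation as a condition on $P$ alone. For the first relation $\{x_{i+1},x_{j+1}\}=\sigma\cdot\{x_i,x_j\}$, using that $(i+1,j+1,k+1)$ is again a cyclic permutation and applying Proposition \ref{Hder} with $\sigma\cdot x_k=x_{k+1}$, the left side is $\partial P/\partial x_{k+1}$ and the right side is $\partial(\sigma\cdot P)/\partial x_{k+1}$; letting $k$ run over $\mathbb Z/3\mathbb Z$ this says $\partial_m(P-\sigma\cdot P)=0$ for every $m$, i.e. $P-\sigma\cdot P$ is a constant. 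For the second relation $\tau\cdot\{x_i,x_j\}=\varepsilon^{i+j}\{x_i,x_j\}$, I would use the commutation rule $\tau\cdot\partial_k=\varepsilon^{-k}\,\partial_k\cdot\tau$ (immediate from $\tau\cdot x_m=\varepsilon^m x_m$) together with $i+j\equiv -k \pmod 3$; this collapses the relation to $\partial_k(P-\tau\cdot P)=0$ for every $k$, i.e. $P-\tau\cdot P$ is a constant.

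It then remains to upgrade \emph{constant} to \emph{zero}, which is the only genuinely non-formal point. Since $\sigma^3=\tau^3=\mathrm{id}$, writing $P-\sigma\cdot P=c$ and summing the three relations obtained by applying $\sigma^0,\sigma^1,\sigma^2$ (constants being $\sigma$-fixed) telescopes to $P-\sigma^3\cdot P=3c=0$, so $\sigma\cdot P=P$; the same argument gives $\tau\cdot P=P$. Conversely, if $\sigma\cdot P=P$ and $\tau\cdot P=P$, the two computations above run backwards to show $\pi(P)$ is $H$-invariant, so $P\in\mathcal H$. This yields $\mathcal H=\mathcal R^{H_3}$ and hence the proposition. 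The main obstacle is precisely this translation of the bracket-level conditions into pointwise conditions on $P$ — in particular the constant-elimination step and the bookkeeping of the weight shift $i+j\equiv -k$ — after which closure under multiplication is automatic from the automorphism property. I would also remark that, because $\sigma$ and $\tau$ preserve the usual degree, $\mathcal H$ is graded, so one may equivalently argue degree by degree and invoke the previous proposition directly.
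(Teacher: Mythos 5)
Your proof is correct, but it follows a genuinely different route from the paper's. The paper proves closure under multiplication directly from the Leibniz-type identity $\{x_i,x_j\}_{PQ}=P\{x_i,x_j\}_{Q}+Q\{x_i,x_j\}_{P}$ (which is just the product rule applied to $\partial(PQ)/\partial x_k$) and then asserts that the $H$-invariance of $\pi(PQ)$ is clear; you instead identify $\mathcal H$ with the invariant ring $\{P\in\mathcal R:\sigma\cdot P=P,\ \tau\cdot P=P\}$ and invoke the general fact that the fixed set of algebra automorphisms is a subalgebra. Your computations check out: Proposition \ref{Hder} turns the $\sigma$-condition into $\partial_m(P-\sigma\cdot P)=0$ for all $m$, the commutation rule $\tau\cdot\partial_k=\varepsilon^{-k}\partial_k\cdot\tau$ together with $i+j\equiv -k\pmod 3$ turns the $\tau$-condition into $\partial_k(P-\tau\cdot P)=0$, and the telescoping argument using $\sigma^3=\tau^3=\mathrm{id}$ correctly kills the residual constants. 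Your version is arguably more self-contained: the paper's one-line conclusion from the product rule actually still needs to know how $\sigma$ and $\tau$ act on the factors $P$ and $Q$ themselves (i.e.\ the content of the preceding proposition, there stated only for homogeneous $P$), whereas you prove the full ``if and only if'' characterization for arbitrary $P$, from which both linearity and multiplicativity are automatic. What the paper's approach buys in exchange is robustness: the Leibniz identity for Jacobian brackets persists in any dimension, where (as Section 3 of the paper notes) $\mathcal H$ is only a sub-vector space when the required $\tau$-degree $l$ is nonzero, so your identification of $\mathcal H$ with a fixed ring is special to the case $l=0$, which does hold in dimension $3$.
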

\begin{proof}
Let $P, Q\in\mathcal H.$ It is clear that for all $\alpha, \beta\in\mathbb C,$ $\alpha P+\beta Q$ belongs to $\mathcal H.$\\
Let us denote by $\{\cdot, \cdot\}_F$ the JPS associated to the polynomial $F\in\mathcal R.$ It is easy to verify that $\{x_i, x_j\}_{PQ}=P\{x_i, x_j\}_{Q}+Q\{x_i, x_j\}_{P}.$ Therefore, it is clear that the $H$-invariance condition is verified for the JPS associated to the polynomial $PQ.$
\end{proof}
We endow $\mathcal H$ with the usual grading of the polynomial algebra $\mathcal R.$
For $F$ an element of $\mathcal R,$ we denote by $\varpi(F)$ its usual weight degree.
We denote by $\mathcal H_i$ the homogeneous subspace of $\mathcal H$ of degree $i.$
\begin{prop}\label{cond3}
If 3 does not divide $i\in\mathbb N$ (in other words $i\neq 3k$) then $\mathcal H_i=0.$
\end{prop}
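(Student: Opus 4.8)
The plan is to exploit the degree constraint of Proposition \ref{degcond} together with the fact that the JPS $\pi(P)$ is assembled from the first-order partial derivatives of $P$. Suppose $P\in\mathcal H_i$ is homogeneous of degree $i$ with $3\nmid i$; the goal is to force $P=0$.

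First I would write out the three defining bracket relations explicitly. Since $(i,j,k)$ runs over the cyclic permutations of $(0,1,2)$, the structure $\pi(P)$ is given by $\{x_0,x_1\}=\partial P/\partial x_2$, $\{x_1,x_2\}=\partial P/\partial x_0$ and $\{x_2,x_0\}=\partial P/\partial x_1$. Because $P$ is homogeneous of degree $i$, each of these three brackets is either identically zero or a homogeneous polynomial of ordinary degree $i-1$.

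Next I would invoke Proposition \ref{degcond} with $n=3$. As $\pi(P)$ is $H$-invariant by the very definition of $\mathcal H$, every monomial occurring in any $\{x_a,x_b\}$ must have ordinary degree of the form $2+3s$ with $s\in\mathbb N$. If some partial derivative $\partial P/\partial x_k$ were nonzero, then, being homogeneous of degree $i-1$, it would satisfy $i-1=2+3s$, that is $i=3(s+1)$, a multiple of $3$; this contradicts $3\nmid i$. Hence all three partial derivatives $\partial P/\partial x_0$, $\partial P/\partial x_1$, $\partial P/\partial x_2$ vanish identically.

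Finally, the Euler identity $i\,P=\sum_{k=0}^{2}x_k\,\partial P/\partial x_k=0$ together with $i\neq 0$ (which holds because $3\nmid i$ forces $i\geq 1$) gives $P=0$, so $\mathcal H_i=0$. The only point requiring a little care is the bookkeeping of the admissible degrees: one must recall that here $\mathbb N$ includes $0$, so that degree-$2$ brackets are permitted and correspond precisely to the first nontrivial case $i=3$. This confirms that the divisibility obstruction is exactly $3\mid i$ and nothing stronger, and there is no serious difficulty beyond correctly matching the degree $i-1$ of $\partial P/\partial x_k$ against the allowed values $2+3s$.
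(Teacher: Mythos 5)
Your proof is correct and follows essentially the same route as the paper: both arguments compare the degree $i-1$ of the brackets $\{x_a,x_b\}=\partial P/\partial x_k$ against the admissible degrees $2+3s$ from Proposition \ref{degcond} to force $3\mid i$. The only difference is that you add the Euler identity to rule out the degenerate case where all partial derivatives vanish, a detail the paper's proof leaves implicit; this is a welcome but minor refinement, not a different approach.
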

\begin{proof}
First of all $\mathcal H_0=\mathbb C.$ We suppose now $i\neq 0.$ Let $P\in\mathcal H_i,$ $P\neq 0.$ Then $\varpi(P)=i.$ It follows from Proposition (\ref{degcond}) and the definition of the Poisson brackets, that there exists $s\in\mathbb N$ such that $\varpi({x_i, x_j})=2+3s.$ The result follows from the fact that $\varpi({x_i, x_j})=\varpi(P)-1.$
\end{proof}
Set $P=\sum c_{\bar \alpha} x_0^{\alpha_0}x_1^{\alpha_1}x_2^{\alpha_2}$ where $\bar\alpha=(\alpha_0,\alpha_1,\alpha_2).$ We suppose that $\varpi(P)=3(1+s).$
We want to find all $\alpha_0, \alpha_1, \alpha_2$ such that $P\in\mathcal H$ and therefore the dimension $\mathcal H_{3(1+s)}$
as $\mathbb C$-vector space.
\begin{prop}
There exist $s'$, $s''$ and $s'''$ such that
\begin{equation}\label{hjps3}
\left\{\begin{array}{ccl}
\alpha_0+\alpha_1+\alpha_2&=&3(1+s)\\
0\alpha_0+\alpha_1+2\alpha_2&=&3s'\\
\alpha_0+2\alpha_1+0\alpha_2&=&3s''\\
2\alpha_0+0\alpha_1+1\alpha_2&=&3s'''
\end{array}\right.
\end{equation}
\end{prop}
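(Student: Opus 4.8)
The plan is to read the four equations as the total-degree condition together with three $\tau$-degree congruences, one for each monomial in the $\sigma$-orbit of a monomial occurring in $P$. First I would fix a monomial $x_0^{\alpha_0}x_1^{\alpha_1}x_2^{\alpha_2}$ with nonzero coefficient $c_{\bar\alpha}$ in $P$. Since $P$ is homogeneous of degree $3(1+s)$, its exponents satisfy $\alpha_0+\alpha_1+\alpha_2=3(1+s)$, which is the first equation of (\ref{hjps3}).

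For the second equation I would invoke the opening Proposition of this section, which gives $\tau\varpi(P)=0$. Because $P$ is then $\tau$-homogeneous of $\tau$-degree $0$, every monomial occurring in it has $\tau$-degree $\equiv 0 \pmod 3$; for our monomial this reads $\alpha_1+2\alpha_2\equiv 0\pmod 3$, so there is an integer $s'$ with $\alpha_1+2\alpha_2=3s'$.

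The third and fourth equations are where the $\sigma$-invariance enters. The same Proposition gives $\sigma\cdot P=P$, so the shifted monomials $\sigma\cdot(x_0^{\alpha_0}x_1^{\alpha_1}x_2^{\alpha_2})=x_0^{\alpha_2}x_1^{\alpha_0}x_2^{\alpha_1}$ and $\sigma^2\cdot(x_0^{\alpha_0}x_1^{\alpha_1}x_2^{\alpha_2})=x_0^{\alpha_1}x_1^{\alpha_2}x_2^{\alpha_0}$ also occur in $P$ with the same nonzero coefficient $c_{\bar\alpha}$. Applying the $\tau$-degree $0$ condition to each of them, via the definition of the $\tau$-degree, yields $\alpha_0+2\alpha_1\equiv 0\pmod 3$ and $2\alpha_0+\alpha_2\equiv 0\pmod 3$, i.e. the third and fourth equations with suitable integers $s''$ and $s'''$.

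I expect no serious obstacle, since the argument is a direct unwinding of the $\tau$-degree formula on the three cyclic shifts. The one conceptual point deserving care is that the congruence obtained from $\{x_i,x_j\}=\partial P/\partial x_k$ directly collapses to $\alpha_1+2\alpha_2\equiv 0$ for all three cyclic choices of $(i,j,k)$ (the shifts in $i+j$ and in the differentiated variable cancel), so the genuinely new information in the third and fourth equations is produced by $\sigma$-invariance rather than by the bracket relations alone. A convenient consistency check is that the three $\tau$-congruences sum to $3(\alpha_0+\alpha_1+\alpha_2)\equiv 0$, so only two of them are independent, which matches having one homogeneity constraint together with one nontrivial $\tau$-condition.
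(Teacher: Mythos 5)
Your proof is correct and follows the approach the paper intends: the paper's own proof is just the one-line citation ``direct consequence of Proposition~\ref{cond3}'', and your unwinding --- the first equation from homogeneity of degree $3(1+s)$, the remaining three from applying the $\tau$-degree-zero condition to the monomial and to its two $\sigma$-shifts, which occur in $P$ by $\sigma\cdot P=P$ --- is exactly the intended content. If anything your version is more accurate, since the substance really comes from the section's opening proposition ($\sigma\cdot P=P$ and $\tau\varpi(P)=0$) rather than from Proposition~\ref{cond3} alone, and your observation that the bracket relations by themselves only yield the single congruence $\alpha_1+2\alpha_2\equiv 0$ is a correct and clarifying point.
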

\begin{proof}
This is a direct consequence of proposition (\ref{cond3}).
\end{proof}
\begin{prop}
The system equation (\ref{hjps3}) has as solutions the following set:
\begin{equation}
\left\{\begin{array}{ccr}
\alpha_0&=&4r-2s'-s''\\
\alpha_1&=&-2r+s'+2s''\\
\alpha_2&=&r+s'-s''
\end{array}\right.
\end{equation}
where $r=1+s,$ $s'$ and $s''$ live in the polygon given by the following inequalities in $\mathbb R^2:$
\begin{equation}\label{poly3}
\left\{\begin{array}{ccr}
x+y&\leqslant&3r\\
2x+y&\leqslant&4r\\
-x-2y&\leqslant&-2r\\
-x+y&\leqslant&r\\
\end{array}\right.
\end{equation}
\end{prop}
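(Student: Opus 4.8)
The plan is to treat (\ref{hjps3}) purely as a linear system: invert it to get the $\alpha_i$ as affine functions of $(r,s',s'')$, and then read off the polygon (\ref{poly3}) as the image of the positivity constraints on the exponents. Before inverting, I would first record a consistency relation among the auxiliary integers. Adding the last three equations of (\ref{hjps3}), the left-hand sides sum to $3(\alpha_0+\alpha_1+\alpha_2)$, which by the first equation equals $9r$; hence $s'+s''+s'''=3r$. This shows that $s'''$ is not an independent parameter but is forced to be $s'''=3r-s'-s''$. Consequently only the first three equations carry information, and I may regard $(r,s',s'')$ as the free data.

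Next I would solve the resulting $3\times 3$ system
$$\alpha_0+\alpha_1+\alpha_2=3r,\qquad \alpha_1+2\alpha_2=3s',\qquad \alpha_0+2\alpha_1=3s''.$$
Its coefficient matrix has determinant $-3\neq 0$, so the solution is unique; a short elimination (solve the second equation for $\alpha_1$, substitute into the first and third) yields $\alpha_2=r+s'-s''$, then $\alpha_0=4r-2s'-s''$ and $\alpha_1=-2r+s'+2s''$. This step is purely computational and reproduces the stated formulas.

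Finally I would impose admissibility. Since the $\alpha_i$ are exponents of a monomial in $P=\sum c_{\bar\alpha}x_0^{\alpha_0}x_1^{\alpha_1}x_2^{\alpha_2}$, they must be non-negative integers. Substituting the explicit formulas, $\alpha_0\geq 0$ becomes $2s'+s''\leq 4r$, $\alpha_1\geq 0$ becomes $-s'-2s''\leq -2r$, and $\alpha_2\geq 0$ becomes $-s'+s''\leq r$; with $(x,y)=(s',s'')$ these are exactly the last three inequalities of (\ref{poly3}). The remaining inequality $x+y\leq 3r$ is just the record of $s'''\geq 0$, equivalently $(2\alpha_0+\alpha_2)/3\geq 0$, and is automatically implied by $\alpha_0,\alpha_2\geq 0$: it is the non-negative combination $\tfrac23(2s'+s'')+\tfrac13(-s'+s'')=s'+s''$, bounded by $\tfrac23(4r)+\tfrac13 r=3r$. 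For integrality, the formulas have integer coefficients, so integer $(s',s'')$ give integer $\alpha_i$; conversely the divisibility built into (\ref{hjps3}) (e.g. $3s'=\alpha_1+2\alpha_2$) guarantees $s',s''\in\mathbb Z$ whenever the $\alpha_i$ are.

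I do not expect a genuine obstacle here: the content is linear algebra followed by a translation of positivity into four half-plane inequalities. The only point requiring care is the bookkeeping of the auxiliary variable $s'''$ — recognizing that $s'''$ is determined by $s'+s''+s'''=3r$ and that the first inequality of (\ref{poly3}) is the (redundant) encoding of $s'''\geq 0$ rather than an independent facet.
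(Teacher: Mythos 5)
Your proposal is correct, and it supplies exactly the routine computation the paper leaves implicit: the paper states this proposition without any proof, and the intended argument is precisely your linear-algebra one (eliminate $s'''$ via $s'+s''+s'''=3r$, invert the nonsingular $3\times 3$ system, and translate $\alpha_i\geq 0$ and $s'''\geq 0$ into the four half-planes of (\ref{poly3})). Your additional observations --- that the inequality $x+y\leq 3r$ is redundant (touching the triangle only at the vertex $(r,2r)$) and that integrality of $(s',s'')$ is equivalent to integrality of the $\alpha_i$ --- are accurate and consistent with the triangle $\mathcal T_r$ with vertices $(0,r)$, $(r,2r)$, $(2r,0)$ described in the following proposition.
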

\begin{rem}
For $r=1,$ one obtains the Artin-Schelter-Tate Poisson algebra which is the JPS given by the Casimir $P=\alpha(x_0^3+x_1^3+x_2^3)+\beta x_0x_1x_2,$ $\alpha, \beta\in\mathbb C.$ Let suppose $\alpha\neq0,$ then it can take the form:
\begin{equation}\label{artcubic}
P=(x_0^3+x_1^3+x_2^3)+\gamma x_0x_1x_2
\end{equation}
where $\gamma\in\mathbb C.$ The interesting feature of this Poisson algebra is that their polynomial
character is preserved even after the following non-algebraic changes of variables:
Let
\begin{equation}\label{npc}
y_0=x_0; \ y_1=x_1x_2^{-1/2}; \ y_2=x_2^{3/2}.
\end{equation}
The polynomial P in the coordinates $(y_0; y_1; y_2)$ has the form
\begin{equation}\label{art6}
\widetilde{P}=(y_0^3+y_1^3y_2+y_2^2)+\gamma y_0y_1y_2.
\end{equation}
The Poisson bracket is also polynomial (which is not evident at all!)
and has the same form:
\begin{equation}
\{x_i, x_j\}=\frac{\partial \widetilde{P}}{\partial x_k}
\end{equation}
where  $(i, j, k)$ is the cyclic permutation of $(0, 1, 2).$
This JPS structure  is no longer satisfied the Heisenberg invariance condition.
But it is invariant with respect the following toric action:
$({\mathbb C}^*)^3 \times {\mathbb P^2}\to {\mathbb P}$ given by the formula:
$$\lambda\cdot(x_0:x_1:x_2)=(\lambda^2 x_0:\lambda x_1:\lambda^3 x_2)$$

Put $deg\ y_0=2; \ deg\ y_2 = 1; \ deg\ y_2 = 3.$ Then the polynomial $\widetilde{P}$ is
also homogeneous in $(y_0; y_1; y_2)$ and defines an elliptic curve
$\widetilde{P}= 0$ in the weighted projective space ${\mathbb WP}_{2;1;3}.$

The similar change of variables
\begin{equation}\label{npc1}
z_0 = x_0^{-3/4}x_1^{3/2};\ z_1 = x_0 ^{1/4} x_1 ^{-1/2} x_2;\ z_2 = x_0 ^{3/2}
\end{equation}
defines the JPS structure invariant with respect to the torus action
$({\mathbb C}^*)^3 \times {\mathbb P^2}\to {\mathbb P}$ given by the formula:
$$\lambda\cdot(x_0:x_1:x_2)=(\lambda x_0:\lambda x_1:\lambda^2 x_2)$$
and related to the elliptic curve $1/3 (z_2 ^2  + z_0 ^2 z_2 + z_0 z_1 ^3 )+kz_0 z_1 z_2 = 0$ in the weighted projective space
 ${\mathbb WP}_{1;1;2}.$

 These structures had appeared in \cite{odru}, their Poisson cohomology were studied by A. Pichereau (\cite{pic}) and their relation to the
 non-commutative del Pezzo surfaces and Calabi-Yau algebras were discussed in \cite{ginzet}.
\end{rem}
\begin{prop}
The subset of $\mathbb R^2$ given by the system (\ref{poly3}) is a triangle $\mathcal T_r$ with $(0, r),$ $(r, 2r)$ and $(2r, 0)$ as vertices.
Then $dim\mathcal H_{3r}=card(\mathcal T_r\cap\mathbb N^2).$
\end{prop}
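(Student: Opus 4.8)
The statement splits into a geometric claim---that the polytope (\ref{poly3}) is the triangle $\mathcal T_r$---and the enumerative claim $\dim\mathcal H_{3r}=\mathrm{card}(\mathcal T_r\cap\mathbb N^2)$, and the plan is to settle the geometry by elementary linear programming and then reduce the dimension count to the lattice-point count. First I would show that $x+y\leqslant 3r$ is redundant: on the region cut out by the other three inequalities the functional $x+y$ is maximised at a vertex, and a check of the candidates shows its maximum is $3r$, attained only at $(r,2r)$, so this inequality produces no edge. The three remaining inequalities then give the supporting lines $2x+y=4r$, $x+2y=2r$ (that is $-x-2y=-2r$) and $y-x=r$ (that is $-x+y=r$); their pairwise intersections yield $\{x+2y=2r,\,y-x=r\}\mapsto(0,r)$, $\{2x+y=4r,\,y-x=r\}\mapsto(r,2r)$ and $\{2x+y=4r,\,x+2y=2r\}\mapsto(2r,0)$, and verifying that each of these points satisfies all four inequalities identifies the feasible set with the closed triangle $\mathcal T_r$ of vertices $(0,r)$, $(r,2r)$, $(2r,0)$.

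For the dimension, the crucial observation is that the solution formulas of the previous proposition,
\[
\alpha_0=4r-2s'-s'',\quad \alpha_1=-2r+s'+2s'',\quad \alpha_2=r+s'-s'',
\]
turn the three nonnegativity requirements $\alpha_0,\alpha_1,\alpha_2\geqslant 0$ into exactly the three active inequalities of $\mathcal T_r$. Hence the affine map $(s',s'')\mapsto(\alpha_0,\alpha_1,\alpha_2)$ is a bijection between $\mathcal T_r\cap\mathbb N^2$ and the exponent triples solving the full system (\ref{hjps3}), i.e. the degree-$3r$ monomials $x_0^{\alpha_0}x_1^{\alpha_1}x_2^{\alpha_2}$ all of whose $\sigma$-images carry $\tau$-degree $\equiv 0\pmod 3$. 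I would then match these admissible monomials with a basis of $\mathcal H_{3r}$: by the first proposition of this section an element of $\mathcal H_{3r}$ is precisely a homogeneous degree-$3r$ polynomial that is $\sigma$-invariant and whose monomials all have $\tau$-degree $0$, and the four equations of (\ref{hjps3}) are arranged exactly so that every monomial in a given $\sigma$-orbit meets this condition.

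The hard part will be this last reduction: upgrading the bijection between admissible monomials and $\mathcal T_r\cap\mathbb N^2$ to the dimension equality $\dim\mathcal H_{3r}=\mathrm{card}(\mathcal T_r\cap\mathbb N^2)$. This is the only non-formal step, because $\mathcal H_{3r}$ is carved out not merely by the $\tau$-degree condition---which acts monomial-by-monomial and is handled by (\ref{hjps3})---but also by the invariance $\sigma\cdot P=P$, which a priori links the coefficients of distinct monomials. I would therefore analyse the action of $\sigma$ on the admissible exponent triples, equivalently the induced order-three affine map on the pairs $(s',s'')$ inside $\mathcal T_r$, and determine the dimension it cuts out; establishing that this dimension is the full lattice-point count $\mathrm{card}(\mathcal T_r\cap\mathbb N^2)$ is the substantive content of the proposition, with everything else reducing to the plane geometry of $\mathcal T_r$ and routine linear algebra over $\mathbb C$.
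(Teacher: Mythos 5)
Your geometric half is correct and in fact supplies details the paper never writes down: the paper offers no proof of this proposition at all (it is presented as an immediate consequence of the preceding parametrization), and your observations check out --- $x+y\leqslant 3r$ is indeed redundant, being the convex combination $\frac{2}{3}(2x+y)+\frac{1}{3}(-x+y)\leqslant\frac{2}{3}\,4r+\frac{1}{3}\,r=3r$ (equivalently, it encodes $s'''=3r-s'-s''\geqslant 0$, which is automatic once $\alpha_0,\alpha_1,\alpha_2\geqslant 0$); the three remaining inequalities are exactly $\alpha_0\geqslant 0$, $\alpha_1\geqslant 0$, $\alpha_2\geqslant 0$; the vertices are as you compute; and the affine map $(s',s'')\mapsto(\alpha_0,\alpha_1,\alpha_2)$ is injective (the $(\alpha_0,\alpha_1)$-block has determinant $-3$) and surjects onto the solutions of (\ref{hjps3}) via $s'=(\alpha_1+2\alpha_2)/3$, $s''=(\alpha_0+2\alpha_1)/3$. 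So you have faithfully reconstructed, and slightly sharpened, everything up to the bijection between $\mathcal T_r\cap\mathbb N^2$ and the admissible monomials.

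The step you defer, however, is not merely the hard part --- as you have framed it, it cannot be carried out, and your suspicion about $\sigma$-invariance is exactly on target. With the definition of $\mathcal H$ used in this section ($\pi(P)$ is $H$-invariant, which by the first proposition of the section forces $\sigma\cdot P=P$ for homogeneous $P$), the admissible exponent triples are permuted by $\sigma$ in orbits of size $3$, except for the single fixed triple $(r,r,r)$; a basis of the $\sigma$-invariants is given by orbit sums, so the dimension would be the orbit count $\bigl(\mathrm{card}(\mathcal T_r\cap\mathbb N^2)+2\bigr)/3=\frac{r^2+r+2}{2}$, strictly smaller than the lattice count for every $r\geqslant 1$. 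The paper's own remarks exhibit the discrepancy: for $r=1$ the family is the two-parameter cubic $\alpha(x_0^3+x_1^3+x_2^3)+\beta x_0x_1x_2$ while $\mathrm{card}(\mathcal T_1\cap\mathbb N^2)=4$, and for $r=2$ the sextic has four parameters $a,b,c,d$ while the count is $10$. The equality as stated is true only under the weaker reading of $\mathcal H$ that the paper adopts explicitly in Section 3 --- the span of \emph{all} monomials $x_0^{\alpha_0}x_1^{\alpha_1}x_2^{\alpha_2}$ satisfying the $\tau$-degree conditions of (\ref{hjps3}), with no coupling of coefficients --- and under that reading your first two paragraphs already constitute a complete proof: the deferred step is vacuous, since $\dim\mathcal H_{3r}$ is then by definition the number of admissible monomials. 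So you should either adopt that monomial-span definition and stop after your bijection (this is what the paper's subsequent counting in Propositions \ref{p1}--\ref{p2} and its Poincar\'e series actually compute), or keep this section's definition of $\mathcal H$ and prove the corrected orbit-count formula; trying to establish the literal equality $\dim\mathcal H_{3r}=\mathrm{card}(\mathcal T_r\cap\mathbb N^2)$ by analysing the $\sigma$-action, as you propose, would end in a refutation rather than a proof.
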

\begin{figure}[h!]
\centering
\includegraphics[width=8cm, height=8cm]{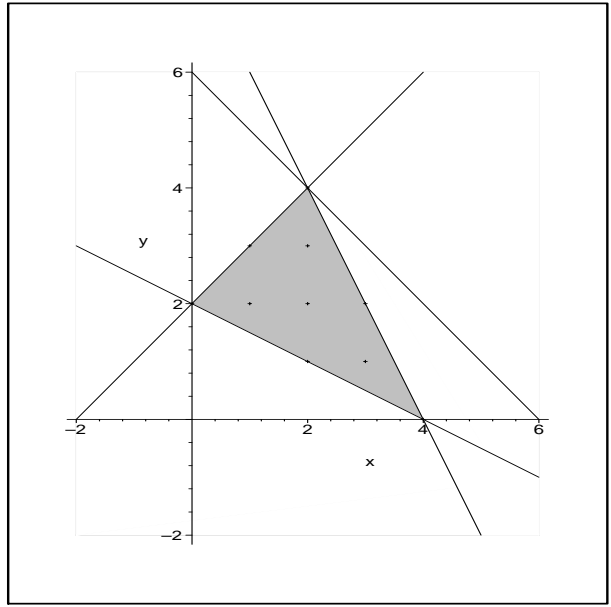}
\caption{An example of the triangle $\mathcal{T}_2$ in the case $r=2$}\label{fig:Triangle}
\end{figure}
\newpage
\begin{rem}
For $r=2,$ the case of above figure, the generic Heisenberg-invariant JPS is given by the sextic polynomial
\begin{equation*}
P^{\vee}=\frac{1}{6}a(p_0^6+p_1^6+p_2^6)+\frac{1}{3}b(p_0^3p_1^3+p_0^3p_2^3+ p_1^3p_2^3)+c(p_0^4p_1p_2+p_0p_1^4p_2+p_0p_1p_2^4)+\frac{1}{2}dp_0^2p_1^2p_2^2
\end{equation*}
$a, b, c, d\in\mathbb C.$ \\
The corresponding Poisson bracket takes the form:
\begin{equation*}
\{p_i, p_j\}=ap_k^5+b(p_i^3+p_j^3)p_k^2+cp_ip_j(p_i^3+p_j^3+4p_k^3)+dp_i^2p_j^2p_k^2,
\end{equation*}
where $i, j, k$ are the cyclic permutations of $0, 1, 2.$\\
This new JPS should be considered as the "projectively dual" to the Artin-Schelter-Tate JPS since the algebraic variety ${\mathcal E}^{\vee}: P^{\vee}=0$ is generically the projective dual curve in $\mathbb P^2$ to the elliptic curve \\
${\mathcal E}: P=(x_0^3+x_1^3+x_2^3)+\gamma x_0x_1x_2=0$.\\
To establish the exact duality and the explicit values of the coefficients we should use (see \cite{GKZ} ch.1) the Schl\"{a}fli's formula for the dual of a smooth
plane cubic ${\mathcal E} =0\subset \mathbb P^2$. The coordinates $(p_0:p_1:p_2)\in {\mathbb P^2}^*$ of a point $p\in {\mathbb P^2}^*$ satisfies
to the sextic relation ${\mathcal E}^{\vee}=0$ iff the line $x_0p_0 + x_1p_1 + x_2p_2 = 0$ is tangent to the conic locus ${\mathcal C}(x,p)=0$
where
$$
{\mathcal C}(x,p)=
{\left|
  \begin{array}{cccc}
    0 & p_0 & p_1 & p_2 \\
    p_0 & \frac{\partial^2 P}{\partial x_0^2} & \frac{\partial^2 P}{\partial x_0 \partial x_1} & \frac{\partial^2 P}{\partial x_0 \partial x_2} \\
    p_1 & \frac{\partial^2 P}{\partial x_1 \partial x_0} & \frac{\partial^2 P}{\partial x_1 \partial x_1} & \frac{\partial^2 P}{\partial x_1\partial x_2} \\
    p_2 & \frac{\partial^2 P}{\partial x_2 \partial x_0} & \frac{\partial^2 P}{\partial x_2 \partial x_1} & \frac{\partial^2 P}{\partial x_2\partial x_2} \\
  \end{array}
\right|}
$$
\end{rem}

Set $\mathcal S_r=\mathcal T_r\cap\mathbb N^2.$ $\mathcal S_r=\mathcal S_r^1\cup\mathcal S_r^2,$ $\mathcal S_r^1=\{(x, y)\in\mathcal S_r: 0\leq x\leq r\}$ and
$\mathcal S_r^1=\{(x, y)\in\mathcal S_r: r< x\leq 2r\}.$ $dim\mathcal H_{3r}=card(\mathcal S_r^1)+card(\mathcal S_r^2).$
\begin{prop} \label{p1}
$$card(\mathcal S_r^1)=\left\{\begin{array}{ll}
\frac{3r^2+6r+4}{4}&\  \mbox{if}\ \  r \  \mbox{is  \ even}\\
&\\
\frac{3r^2+6r+3}{4}&\  \mbox{if}\ \  r \  \mbox{is  \ odd}.
\end{array} \right.$$
\end{prop}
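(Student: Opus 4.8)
The plan is to count $\mathcal{S}_r^1$ column by column: I fix an integer $x$ with $0\le x\le r$, count the integers $y$ for which $(x,y)$ lies in the triangle $\mathcal{T}_r$, and then sum these column counts over $x$. The whole difficulty is concentrated in identifying, once and for all, which of the defining constraints of $\mathcal{T}_r$ are actually binding on the strip $0\le x\le r$.

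So first I would translate the system (\ref{poly3}) into bounds on $y$ for fixed $x$: the inequalities give the upper bounds $y\le x+r$, $y\le 4r-2x$, $y\le 3r-x$ and the lower bounds $y\ge r-x/2$ (from $x+2y\ge 2r$) together with $y\ge 0$. For $0\le x\le r$ one checks the elementary inequalities $x+r\le 4r-2x$ and $x+r\le 3r-x$, both equivalent to $x\le r$, so that the only effective upper bound is $y\le x+r$; likewise $r-x/2\ge r/2>0$ shows the effective lower bound is $y\ge r-x/2$. Hence, for each admissible $x$, the feasible $y$ are exactly the integers in the interval $[\lceil r-x/2\rceil,\, x+r]$, giving a column count of $(x+r)-\lceil r-x/2\rceil+1$.

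Next I would evaluate the ceiling according to the parity of $x$. Writing $x=2t$ gives $r-x/2=r-t\in\mathbb{Z}$ and a count of $3t+1$, whereas $x=2t+1$ gives $\lceil r-x/2\rceil=r-t$ and a count of $3t+2$. Summing over $x=0,1,\dots,r$ then splits according to the parity of $r$: for $r=2m$ the even columns contribute $\sum_{t=0}^{m}(3t+1)$ and the odd columns $\sum_{t=0}^{m-1}(3t+2)$, while for $r=2m+1$ both sums run up to $t=m$. Each piece is a plain arithmetic progression, and collecting the two cases yields $3m^2+3m+1=\frac{3r^2+6r+4}{4}$ when $r=2m$ is even and $3(m+1)^2=\frac{3r^2+6r+3}{4}$ when $r=2m+1$ is odd, which is exactly the asserted formula.

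The only genuinely delicate step is the first one: verifying that across the entire strip $0\le x\le r$ the active constraints are precisely $y\le x+r$ from above and $x+2y\ge 2r$ from below, so that the redundant inequalities $x+y\le 3r$, $2x+y\le 4r$ and $y\ge 0$ can be discarded. Once this is pinned down, the rest is a routine summation, the sole remaining subtlety being the parity bookkeeping in the ceiling function and in the range of the summation index.
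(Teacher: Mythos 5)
Your proof is correct and follows essentially the same route as the paper's: fixing the abscissa $\alpha\in\{0,\dots,r\}$, counting the lattice points in each vertical slice (the paper's $\mathcal D_r^{\alpha}$, whose cardinality $(\alpha+1)+\lfloor\alpha/2\rfloor$ agrees with your $3t+1$ / $3t+2$ by parity), and summing the resulting arithmetic progressions. The only difference is that you make explicit the identification of the binding constraints $y\le x+r$ and $x+2y\ge 2r$ on the strip $0\le x\le r$, which the paper leaves as ``easy to prove.''
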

\begin{proof}
Let $\alpha\in\{0,\cdots, r\}$ and set $\mathcal D_r^{\alpha}=\{(x,y)\in\mathcal S_r^1 : x=\alpha\}.$ Therefore
$$Card(\mathcal S_r^1)=\displaystyle{\sum_{\alpha=0}^r}Card(\mathcal D_r^{\alpha}).$$
Let $$\beta^{\alpha}_{max}=max\{\beta : (\alpha, \beta)\in\mathcal D_r^{\alpha}\}$$
and
$$\beta^{\alpha}_{min}=min\{\beta : (\alpha, \beta)\in\mathcal D_r^{\alpha}\}.$$
$Card(\mathcal D_r^{\alpha})=\beta^{\alpha}_{max}-\beta^{\alpha}_{min}+1.$\\
It is easy to prove that
$$Card(\mathcal D_r^{\alpha})=(\alpha+1)+\left\lfloor \frac{\alpha}{2} \right\rfloor
=\left\{\begin{array}{ll}
\frac{3\alpha+2}{2}&\  \mbox{if}\ \  \alpha \  \mbox{is  \ even}\\
&\\
\frac{3\alpha+1}{2}&\  \mbox{if}\ \  \alpha \  \mbox{is  \ odd}.
\end{array} \right.
.$$
The result follows from the summation of all $Card(\mathcal D_r^{\alpha}),$ $\alpha\in\{0,\cdots, r\}.$
\end{proof}
\begin{prop} \label{p2}
$$card(\mathcal S_r^2)=\left\{\begin{array}{ll}
\frac{3r^2}{4}&\  \mbox{if}\ \  r \  \mbox{is  \ even}\\
&\\
\frac{3r^2+1}{4}&\  \mbox{if}\ \  r \  \mbox{is  \ odd}.
\end{array} \right.$$
\end{prop}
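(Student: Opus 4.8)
The plan is to mirror the column-by-column count used in the proof of Proposition \ref{p1}, applied now to the strip $r<x\leq 2r$. For each integer $\alpha$ with $r<\alpha\leq 2r$ I set $\mathcal{D}_r^{\alpha}=\{(x,y)\in\mathcal{S}_r^2:x=\alpha\}$, so that $\mathrm{card}(\mathcal{S}_r^2)=\sum_{\alpha=r+1}^{2r}\mathrm{card}(\mathcal{D}_r^{\alpha})$, and I record the extremal values $\beta_{\max}^{\alpha}$ and $\beta_{\min}^{\alpha}$ exactly as there, with $\mathrm{card}(\mathcal{D}_r^{\alpha})=\beta_{\max}^{\alpha}-\beta_{\min}^{\alpha}+1$.

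First I would determine which of the four inequalities of (\ref{poly3}) are binding on this strip. Fixing $x=\alpha$, the system bounds $y$ from above by $\min(3r-\alpha,\,4r-2\alpha,\,\alpha+r)$ and from below by $r-\alpha/2$ (from $x+2y\geq 2r$), the constraint $y\geq 0$ being automatic since $\alpha\leq 2r$. The essential difference from the regime $0\leq x\leq r$ of Proposition \ref{p1} is that for $\alpha>r$ one has $4r-2\alpha<\alpha+r$ and $4r-2\alpha<3r-\alpha$, so the active upper face switches from $y\leq x+r$ to the edge $2x+y=4r$; hence $\beta_{\max}^{\alpha}=4r-2\alpha$ while $\beta_{\min}^{\alpha}=\lceil r-\alpha/2\rceil$. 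Recognizing this switch of the binding face is the one genuinely non-automatic point; everything afterwards is bookkeeping.

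Next I would evaluate $\mathrm{card}(\mathcal{D}_r^{\alpha})=4r-2\alpha-\lceil r-\alpha/2\rceil+1$, splitting on the parity of $\alpha$ to resolve the ceiling: $\lceil r-\alpha/2\rceil=r-\alpha/2$ when $\alpha$ is even, and $(2r-\alpha+1)/2$ when $\alpha$ is odd. This gives $\mathrm{card}(\mathcal{D}_r^{\alpha})=3r-\tfrac{3}{2}\alpha+1$ for even $\alpha$ and $3r-\tfrac{3}{2}\alpha+\tfrac{1}{2}$ for odd $\alpha$, which I package uniformly as $3r-\tfrac{3}{2}\alpha+1-\tfrac{1}{2}\delta_{\alpha}$, where $\delta_{\alpha}\in\{0,1\}$ is the parity indicator of $\alpha$.

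Finally I would sum over $\alpha=r+1,\dots,2r$. Using $\sum_{\alpha=r+1}^{2r}\alpha=\tfrac{1}{2}r(3r+1)$ and $\sum_{\alpha=r+1}^{2r}1=r$, the first two terms collapse to $\tfrac{1}{4}(3r^2+r)$, so that $\mathrm{card}(\mathcal{S}_r^2)=\tfrac{1}{4}(3r^2+r)-\tfrac{1}{2}N_{\mathrm{odd}}$, with $N_{\mathrm{odd}}$ the number of odd integers in $\{r+1,\dots,2r\}$. Counting these gives $N_{\mathrm{odd}}=r/2$ when $r$ is even and $N_{\mathrm{odd}}=(r-1)/2$ when $r$ is odd; substituting yields $\tfrac{3r^2}{4}$ and $\tfrac{3r^2+1}{4}$ respectively, which is the claim. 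I expect no obstacle here beyond keeping the two parity cases of $r$ straight, the only subtlety worth checking being the boundary column $\alpha=2r$, where $\beta_{\max}^{\alpha}=\beta_{\min}^{\alpha}=0$ correctly contributes the single vertex $(2r,0)$.
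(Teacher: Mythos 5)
Your proposal is correct and follows essentially the same route as the paper: a column-by-column count of $\mathcal D_r^{\alpha}$ for $\alpha=r+1,\dots,2r$, with $\mathrm{card}(\mathcal D_r^{\alpha})=\beta^{\alpha}_{\max}-\beta^{\alpha}_{\min}+1=(3r+1)-2\alpha+\left\lfloor \alpha/2\right\rfloor$, followed by summation split by parity. Your identification of $2x+y=4r$ as the binding upper face and $x+2y=2r$ as the lower one on the strip $r<x\leq 2r$, and the final count of odd $\alpha$ in that range, all check out against the paper's (terser) argument.
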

\begin{proof}
Let $\alpha\in\{r+1,\cdots, 2r\}$ and set $\mathcal D_r^{\alpha}=\{(x,y)\in\mathcal S_r^2 : x=\alpha\}.$ Therefore
$$Card(\mathcal S_r^2)=\displaystyle{\sum_{\alpha=r+1}^{2r}}Card(\mathcal D_r^{\alpha}).$$
Let $$\beta^{\alpha}_{max}=max\{\beta : (\alpha, \beta)\in\mathcal D_r^{\alpha}\}$$
and
$$\beta^{\alpha}_{min}=min\{\beta : (\alpha, \beta)\in\mathcal D_r^{\alpha}\}.$$
$Card(\mathcal D_r^{\alpha})=\beta^{\alpha}_{max}-\beta^{\alpha}_{min}+1.$\\
It is easy to prove that
$$Card(\mathcal D_r^{\alpha})=(3r+1)-2\alpha+\left\lfloor \frac{\alpha}{2} \right\rfloor
=\left\{\begin{array}{ll}
\frac{6r+2-3\alpha}{2}&\  \mbox{if}\ \  \alpha \  \mbox{is  \ even}\\
&\\
\frac{6r+1-3\alpha}{2}&\  \mbox{if}\ \  \alpha \  \mbox{is  \ odd}.
\end{array} \right.
.$$
The result follows from the summation of all $Card(\mathcal D_r^{\alpha}),$ $\alpha\in\{r+1,\cdots, 2r\}.$
\end{proof}
\begin{theo} \
$$dim\mathcal H_{3(1+s)}=
 \frac{3}{2}s^2+\frac{9}{2}s+4$$
\end{theo}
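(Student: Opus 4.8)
The plan is to obtain the stated closed form as a direct corollary of Propositions \ref{p1} and \ref{p2}, together with the decomposition recorded just before Proposition \ref{p1}. Recall that $\mathcal{S}_r = \mathcal{S}_r^1 \cup \mathcal{S}_r^2$ is a disjoint union and that $r = 1+s$, so that
$$\dim \mathcal{H}_{3(1+s)} = \dim \mathcal{H}_{3r} = \operatorname{card}(\mathcal{S}_r^1) + \operatorname{card}(\mathcal{S}_r^2).$$
Thus no new geometric input is needed: the entire content of the theorem is the arithmetic of summing the two lattice-point counts and re-expressing the result in the variable $s$.

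First I would add the two cardinalities, keeping track of the parity of $r$. For $r$ even, Propositions \ref{p1} and \ref{p2} give
$$\frac{3r^2 + 6r + 4}{4} + \frac{3r^2}{4} = \frac{6r^2 + 6r + 4}{4},$$
while for $r$ odd they give
$$\frac{3r^2 + 6r + 3}{4} + \frac{3r^2 + 1}{4} = \frac{6r^2 + 6r + 4}{4}.$$
The key observation -- really the only point where anything happens -- is that the two parity branches produce the \emph{same} numerator $6r^2 + 6r + 4$: the odd/even discrepancies coming from the floor functions $\lfloor \alpha/2 \rfloor$ in the proofs of Propositions \ref{p1} and \ref{p2} cancel exactly against one another. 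Hence the parity case split disappears and one obtains the single uniform formula
$$\dim \mathcal{H}_{3r} = \frac{6r^2 + 6r + 4}{4} = \frac{3r^2 + 3r + 2}{2}, \qquad r \geq 1.$$

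Finally I would substitute $r = 1+s$ and expand, using $3(1+s)^2 + 3(1+s) + 2 = 3s^2 + 9s + 8$, to conclude
$$\dim \mathcal{H}_{3(1+s)} = \frac{3s^2 + 9s + 8}{2} = \frac{3}{2}s^2 + \frac{9}{2}s + 4,$$
as claimed. I do not expect any genuine obstacle: the theorem is a bookkeeping consequence of the two preceding counts, and the only step deserving care is verifying the cancellation of the parity terms. As a sanity check one can test $r = 2$ (that is $s = 1$), where the formula predicts $\dim \mathcal{H}_6 = 10$, in agreement with $\operatorname{card}(\mathcal{T}_2 \cap \mathbb{N}^2) = 10$ and with the ten monomials $x_0^6, x_1^6, x_2^6, x_0^3x_1^3, x_0^3x_2^3, x_1^3x_2^3, x_0^4x_1x_2, x_0x_1^4x_2, x_0x_1x_2^4, x_0^2x_1^2x_2^2$ displayed in the sextic $P^{\vee}$ associated with the triangle $\mathcal{T}_2$; similarly $r=1$ ($s=0$) gives $4$, matching the Artin-Schelter-Tate cubic.
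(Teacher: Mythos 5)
Your proof is correct and follows exactly the route the paper intends: the paper's own proof simply states that the theorem is a direct consequence of Propositions \ref{p1} and \ref{p2}, and you have supplied precisely the arithmetic (summing the two counts, checking that both parity branches give $\frac{6r^2+6r+4}{4}$, and substituting $r=1+s$) that this one-line proof leaves implicit. The sanity checks at $r=1$ and $r=2$ are consistent with the examples in the paper.
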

\begin{proof}
This result is a direct consequence of propositions (\ref{p1}) and (\ref{p2}).
\end{proof}
\begin{cor}
 The Poincar\'e series of the algebras $\mathcal H$ is
The Poincar\'e series of the algebras $\mathcal H$ is
\begin{equation*}
P(\mathcal H, t)= \sum_{s\geq-1} \mathrm{dim}\left( \mathcal H_{3(1+s)}\right) t^{3(s+1)}=\frac{1+t^3+t^6}{(1-t^3)^3}.
\end{equation*}
\end{cor}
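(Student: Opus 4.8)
The plan is to substitute the closed formula for $\dim\mathcal H_{3(1+s)}$ furnished by the preceding Theorem directly into the defining sum and evaluate the resulting power series in closed form. To lighten the bookkeeping I would first set $u=t^3$ and reindex by $n=s+1$, so that $n$ runs over all nonnegative integers. Rewriting the quadratic $\frac32 s^2+\frac92 s+4$ in terms of $n=s+1$ gives the simpler expression
\begin{equation*}
\dim\mathcal H_{3n}=\frac32 n^2+\frac32 n+1,
\end{equation*}
which one can cross-check at $n=0,1,2$ against the values $1,4,10$ (in particular $n=0$ recovers $\mathcal H_0=\mathbb C$ and $n=1$ the four-dimensional Artin--Schelter--Tate family). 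The Poincar\'e series then reads $P(\mathcal H,t)=\sum_{n\ge0}\bigl(\frac32 n^2+\frac32 n+1\bigr)u^n$.

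Next I would invoke the three standard generating-function identities
\begin{equation*}
\sum_{n\ge0}u^n=\frac1{1-u},\qquad
\sum_{n\ge0}nu^n=\frac{u}{(1-u)^2},\qquad
\sum_{n\ge0}n^2u^n=\frac{u(1+u)}{(1-u)^3},
\end{equation*}
valid as formal power series, and assemble the three contributions term by term. Placing everything over the common denominator $(1-u)^3$ yields
\begin{equation*}
P(\mathcal H,t)=\frac{\tfrac32 u(1+u)+\tfrac32 u(1-u)+(1-u)^2}{(1-u)^3}.
\end{equation*}

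The only real work is the numerator simplification, which is elementary: the products $\frac32 u(1+u)$ and $\frac32 u(1-u)$ have opposite $u^2$ contributions, so their sum collapses to $3u$, and adding $(1-u)^2=1-2u+u^2$ produces $1+u+u^2$. Restoring $u=t^3$ gives exactly
\begin{equation*}
P(\mathcal H,t)=\frac{1+u+u^2}{(1-u)^3}=\frac{1+t^3+t^6}{(1-t^3)^3},
\end{equation*}
as claimed. There is no genuine obstacle here---the identity is a purely formal consequence of the Theorem---so the only point requiring a modicum of care is the initial reindexing $n=s+1$, which renders the quadratic symmetric and keeps the subsequent algebra transparent.
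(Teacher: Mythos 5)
Your computation is correct and is precisely the routine argument the Corollary calls for: the paper states it as a direct consequence of the Theorem $\dim\mathcal H_{3(1+s)}=\frac{3}{2}s^2+\frac{9}{2}s+4$ and offers no written proof, and your reindexing $n=s+1$, $u=t^3$ followed by the three standard series $\sum u^n$, $\sum nu^n$, $\sum n^2u^n$ reproduces $\frac{1+t^3+t^6}{(1-t^3)^3}$ with the numerator collapsing exactly as you describe. The spot-checks $1,4,10$ against the coefficients of $\frac{1+u+u^2}{(1-u)^3}$ confirm the identity, so nothing is missing.
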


\section{$H$-invariant JPS in any dimension}
In order to formulate the problem in any dimension, let us remind some number-theoretic notions concerning the enumeration of nonnegative integer points in a polytope or more generally discrete volume of a polytope.
\subsection{Enumeration of integer solutions to linear inequalities}
In their papers \cite{bero, cosusa}, the authors study the problem of nonnegative integer solutions to linear inequalities as well as their relation with the enumeration of integer partitions and compositions.\\  Define the weight of a sequence $\lambda=(\lambda_0, \lambda_2,\cdots, \lambda_{n-1})$ of integers to be $|\lambda| = \lambda_0 +\cdots+\lambda_{n-1}.$ If sequence $\lambda$ of weight $N$ has all parts nonnegative, it is called a composition of $N$; if, in addition, $\lambda$ is a non increasing sequence, we call it a partition of $N$.\\
Given an $r\times n$ integer matrix $C = [c_{i,j} ],$ $(i, j)\in(\{-1\}\cup\mathbb Z/r\mathbb Z)\times\mathbb Z/n\mathbb Z,$ consider the set $S_C$ of nonnegative integer sequences
$\lambda=(\lambda_1, \lambda_2,\cdots, \lambda_n)$ satisfying the constraints
\begin{equation}\label{encond}
c_{i,-1}+c_{i,0}\lambda_0+c_{i,1}\lambda_1+\cdots+c_{i,n-1}\lambda_{n-1}\geqq 0, \ \ 0\leq r\leq n-1
\end{equation}
The associated full generating function is defined as follows:
\begin{equation}
F_C(x_0, x_2,\cdots, x_{n-1})=\displaystyle{\sum_{\lambda\in S_C}}x_{0}^{\lambda_0}x_{1}^{\lambda_1}\cdots x_{n-1}^{\lambda_{n-1}}.
\end{equation}
This function "encapsulates" the solution set $S_C$: the coefficient of $q^N$ in $F_C$($qx_0$, $qx_1$,$\cdots$, $qx_{n-1}$) is a "listing" (as the terms of a polynomial) of all nonnegative integer solutions to (\ref{encond}) of weight $N$ and the number of such solutions is the coefficient of $q^N$ in $F_C(q, q, \cdots, q).$
\subsection{Formulation of the problem in any dimension}
Let $\mathcal R=\mathbb C[x_0, x_1,\cdots, x_{n-1}]$ be the polynomial algebra with $n$ generators. For given $n-2$ polynomials $P_1, P_2,\cdots, P_{n-2}\in\mathcal R,$ one can associate the JPS $\pi(P_1,\cdots, P_{n-2})$ on $\mathcal R$ given by the formula:
$$\{f,g\}=\frac{df\wedge dg\wedge dP_1\wedge\cdots\wedge dP_{n-2}}{dx_0\wedge dx_1\wedge\cdots\wedge dx_{n-1}},$$
for $f, g\in\mathcal R.$\\
 We will denote by $P$ the particular Casimir $P=\displaystyle{\prod_{i=1}^{n-2}}P_i$ of the Poisson structure $\pi(P_1,\cdots, P_{n-2}).$ We suppose that each $P_i$ is homogeneous in the sense of $\tau$-degree.
\begin{prop}
Consider a JPS $\pi(P_1,\cdots, P_{n-2})$ given by homogeneous (in the sense of $\tau$-degree) polynomials $P_1,\cdots, P_{n-2}.$
If $\pi(P_1,\cdots, P_{n-2})$ is $H$-invariant, then
\begin{equation}
\tau-\varpi(\sigma\cdot P)=\tau-\varpi(P)=\left\{\begin{array}{ll}
\frac{n}{2}&\  \mbox{if}\ \  n \  \mbox{is  \ even}\\
&\\
0&\  \mbox{if}\ \  n \  \mbox{is  \ odd}.
\end{array} \right.
\end{equation}
where $P=P_1P_2\cdots P_{n-2}.$
\end{prop}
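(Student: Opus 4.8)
The plan is to reduce the whole statement to a single $\tau$-degree bookkeeping on the Jacobian determinant expressing the bracket. First I would unfold the defining formula (\ref{jps}): writing $dP_\ell=\sum_m(\partial P_\ell/\partial x_m)\,dx_m$ and wedging against $dx_i\wedge dx_j$, all the differentials along $x_i$ or $x_j$ drop out, so that
$$\{x_i,x_j\}=\pm\det\Big[\tfrac{\partial P_\ell}{\partial x_{k_m}}\Big]_{\ell,m=1}^{n-2},$$
where $\{k_1,\dots,k_{n-2}\}=\{0,\dots,n-1\}\setminus\{i,j\}$. Since each $P_\ell$ is $\tau$-homogeneous and differentiating by $x_k$ lowers the $\tau$-degree by $k$ (the exponent of $x_k$ drops by one and $x_k$ contributes $k$ to the $\tau$-degree), each entry $\partial P_\ell/\partial x_{k_m}$ is $\tau$-homogeneous of degree $\tau\varpi(P_\ell)-k_m$.

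The key observation is that every monomial in the determinant expansion, being a signed product $\prod_\ell\partial P_\ell/\partial x_{k_{\pi(\ell)}}$ over a permutation $\pi$, carries the \emph{same} $\tau$-degree $\sum_\ell\tau\varpi(P_\ell)-\sum_\ell k_{\pi(\ell)}=\tau\varpi(P)-\sum_m k_m$, because $\pi$ merely permutes the column indices. Hence $\{x_i,x_j\}$ is itself $\tau$-homogeneous with $\tau\varpi(\{x_i,x_j\})\equiv\tau\varpi(P)-\big(\tfrac{n(n-1)}2-i-j\big)\pmod n$, using $\sum_m k_m=\sum_{k=0}^{n-1}k-i-j$. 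Now I impose $H$-invariance: the relation $\tau\cdot\{x_i,x_j\}=\varepsilon^{i+j}\{x_i,x_j\}$ forces $\tau\varpi(\{x_i,x_j\})\equiv i+j$, and substituting gives $\tau\varpi(P)\equiv\tfrac{n(n-1)}2\pmod n$, independent of $i,j$. Evaluating this residue by parity finishes the numerical part: for $n$ odd $\tfrac{n-1}2$ is an integer so $\tfrac{n(n-1)}2\equiv0$, while for $n$ even $\tfrac{n(n-1)}2=\tfrac n2(n-1)\equiv-\tfrac n2\equiv\tfrac n2\pmod n$.

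For the remaining equality $\tau\varpi(\sigma\cdot P)=\tau\varpi(P)$ I would compute directly how $\sigma$ moves the $\tau$-degree of a monomial $M$: since $\sigma$ carries the exponent of $x_i$ to that of $x_{i+1}$, one gets $\tau\varpi(\sigma\cdot M)\equiv\tau\varpi(M)+\deg M\pmod n$, where $\deg M$ is the ordinary degree. It therefore suffices to show $\deg P\equiv0\pmod n$. This follows from Proposition \ref{degcond}: the determinant has degree $\deg P-(n-2)$ (each of the $n-2$ entries loses one degree from a $P_\ell$), while every monomial of $\{x_i,x_j\}$ has degree $2+sn$, so $\deg P-(n-2)\equiv2$, i.e. $\deg P\equiv0\pmod n$. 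Combining, $\sigma$ preserves the $\tau$-degree of $P$.

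The main obstacle is the determinant bookkeeping of the middle paragraph: making rigorous that the bracket is genuinely $\tau$-homogeneous of the asserted degree, and that one may select a pair $(i,j)$ with $\{x_i,x_j\}\neq0$ so that the $H$-invariance equation is non-vacuous. Once the $\tau$-degree of a single nonzero bracket is pinned down the conclusion is forced, and the parity computation of $n(n-1)/2\bmod n$ together with the $\sigma$-shift formula are then routine.
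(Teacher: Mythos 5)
Your proposal is correct, and its first half coincides with the paper's own argument: the paper also expands $\{x_i,x_j\}$ as the signed sum $\pm\sum_{\alpha}(-1)^{|\alpha|}\prod_\ell \partial P_\ell/\partial x_{\alpha(i_\ell)}$ over bijections of $\{0,\dots,n-1\}\setminus\{i,j\}$, uses that each factor has $\tau$-degree $\tau\varpi(P_\ell)-\alpha(i_\ell)$, sums the indices to $\tfrac{n(n-1)}{2}-i-j$, and concludes $\tau\varpi(P)\equiv\tfrac{n(n-1)}{2}\pmod n$ with the same parity evaluation. Where you genuinely diverge is the equality $\tau\varpi(\sigma\cdot P)=\tau\varpi(P)$: the paper simply reruns the identical computation on $\sigma\cdot\{x_i,x_j\}=\{x_{i+1},x_{j+1}\}$, which by Proposition \ref{Hder} is the same determinant built from the $\sigma\cdot P_\ell$ with shifted column indices, so $\tau\varpi(\sigma\cdot P)$ is pinned to the same residue $\tfrac{n(n-1)}{2}$ without any new input. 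You instead prove the shift identity $\tau\varpi(\sigma\cdot M)\equiv\tau\varpi(M)+\deg M\pmod n$ and then extract $\deg P\equiv 0\pmod n$ from Proposition \ref{degcond}; this is a nice route because it isolates the real mechanism (that $\sigma$-invariance of the $\tau$-degree is equivalent to the ordinary degree being a multiple of $n$, the fact that reappears in Proposition \ref{condn}), but it costs you an extra tacit hypothesis: your degree bookkeeping ($\deg$ of the determinant equals $\deg P-(n-2)$, and $\sigma\cdot P$ being $\tau$-homogeneous at all) quietly assumes the $P_\ell$ are homogeneous for the \emph{ordinary} grading, while the proposition only assumes $\tau$-homogeneity; with inhomogeneous $P_\ell$ you would only control sums $\sum_\ell\deg M_\ell$ over tuples surviving in the product, not individual monomials of $P$. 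The paper's version avoids this (its own implicit gap is only the non-vacuousness of at least one bracket, which you correctly flag). Either fix the hypothesis or argue monomial-by-monomial; otherwise the argument stands.
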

\begin{proof}
Let $i<j\in\mathbb Z/n\mathbb Z$ and consider the set $I_{i,j},$ formed by the integers $i_1<i_2<\cdots <i_{n-2}\in\mathbb Z/n\mathbb Z\setminus\{i, j\}.$ We denote by $S_{i, j}$ the set of all permutation of elements of $I_\{i, j\}.$ We have:
$$\begin{array}{lll}
\{x_i, x_j\}&=&(-1)^{i+j-1}\frac{dx_i\wedge dx_j\wedge dP_1\wedge dP_2\wedge\cdots\wedge dP_{n-2}}{dx_0\wedge dx_1\wedge dP_2\wedge\cdots\wedge dx_{n-1}}\\ \\
&=&(-1)^{i+j-1}\displaystyle{\sum_{\alpha\in S_{i, j}}}(-1)^{\mid\alpha\mid}\frac{\partial P_1}{\partial x_{\alpha(i_1)}}\dots\frac{\partial P_{n-2}}{\partial x_{\alpha(i_{n-2})}}.
\end{array}$$
From the $\tau$-degree condition\\ \mbox{} \\
 $i+j\equiv $($\tau$-$\varpi(P_1)-\alpha(i_1)$)$+\cdots+$($\tau$-$\varpi(P_{n-2})-\alpha(i_{n-2})$) $modulo \ n.$\\
We can deduce therefore that
$$\tau-\varpi(P_1\cdots P_{n-2})\equiv\frac{n(n-1)}{2} \ \ modulo \ n.$$
And we obtain the first part of the result. The second part is the direct consequence of facts that
$$\sigma\cdot\{x_i, x_j\}=\{x_{i+1}, x_{j+1}\}=(-1)^{i+j-1}\displaystyle{\sum_{\alpha\in S_{i, j}}}(-1)^{\mid\alpha\mid}\frac{\partial(\sigma\cdot P_1)}{\partial x_{{\alpha(i_1)}+1}}\dots\frac{\partial(\sigma\cdot P_{n-2})}{\partial x_{\alpha(i_{n-2})+1}},$$
$\alpha(i_1)+1\neq\cdots\neq\alpha(i_{n-2})+1\in Z/n\mathbb Z\setminus\{i+1, j+1\}$
and the $\tau$-degree condition.
\end{proof}
Set
$$l=\left\{\begin{array}{ll}
\frac{n}{2}&\  \mbox{if}\ \  n \  \mbox{is  \ even}\\
&\\
0&\  \mbox{if}\ \  n \  \mbox{is  \ odd}.
\end{array} \right.$$
Let $\mathcal H$ the set of all $Q=\in\mathcal R$
such that $\tau$-$\varpi(\sigma\cdot Q)=\tau$-$\varpi(Q)=l.$ One can easily check the following result:
\begin{prop}
$\mathcal H$ is a sub-vector space of $\mathcal R.$ It is subalgebra of $\mathcal R$ if $l=0.$
\end{prop}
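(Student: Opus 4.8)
The plan is to recast the two defining conditions of $\mathcal{H}$ as eigenspace conditions for the linear operator $\tau$; this makes the vector-space structure immediate, and the subalgebra claim then follows from the fact that both $\sigma$ and $\tau$ act by \emph{algebra} automorphisms of $\mathcal{R}$.

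First I would observe that, by the explicit formula for the $\tau$-action, a monomial $M$ satisfies $\tau\cdot M=\varepsilon^{d}M$ exactly when $\tau$-$\varpi(M)=d$. Hence a nonzero polynomial has pure $\tau$-degree $l$ precisely when it is an eigenvector of $\tau$ for the eigenvalue $\varepsilon^{l}$, and (adopting the usual convention that the zero polynomial is included) the set of $Q$ with $\tau$-$\varpi(Q)=l$ is exactly the eigenspace $\mathcal{R}_{l}:=\ker(\tau-\varepsilon^{l}\,\mathrm{id})$. The remaining condition $\tau$-$\varpi(\sigma\cdot Q)=l$ then reads $\sigma\cdot Q\in\mathcal{R}_{l}$, i.e. $Q\in\sigma^{-1}(\mathcal{R}_{l})$, so that
$$\mathcal{H}=\mathcal{R}_{l}\cap\sigma^{-1}(\mathcal{R}_{l}).$$
Since $\mathcal{R}_{l}$ is the kernel of a linear map and $\sigma$ is a linear automorphism, both $\mathcal{R}_{l}$ and $\sigma^{-1}(\mathcal{R}_{l})$ are vector subspaces of $\mathcal{R}$, and the intersection of two subspaces is again a subspace. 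This establishes the first assertion.

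For the second assertion I would use that $\tau$-degrees add under multiplication: because $\tau$ is an algebra automorphism, $\tau\cdot P=\varepsilon^{a}P$ and $\tau\cdot Q=\varepsilon^{b}Q$ give $\tau\cdot(PQ)=(\tau\cdot P)(\tau\cdot Q)=\varepsilon^{a+b}PQ$, so $\tau$-$\varpi(PQ)\equiv a+b\pmod n$. When $l=0$, take $P,Q\in\mathcal{H}$; then $\tau\cdot P=P$ and $\tau\cdot Q=Q$, whence $\tau\cdot(PQ)=PQ$ and $PQ\in\mathcal{R}_{0}$. Moreover $\sigma$ is also an algebra automorphism, so $\sigma\cdot(PQ)=(\sigma\cdot P)(\sigma\cdot Q)$ with both factors in $\mathcal{R}_{0}$ (by definition of $\mathcal{H}$); the same additivity then gives $\sigma\cdot(PQ)\in\mathcal{R}_{0}$. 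Therefore $PQ\in\mathcal{H}$, and $\mathcal{H}$ is a subalgebra.

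There is no genuine obstacle in the argument; the only points that require care are the correct reading of ``$\tau$-degree'' as an eigenvalue condition, so that $\mathcal{R}_{l}$ is a true subspace rather than merely the set of polynomials whose top $\tau$-degree is $l$, together with the bookkeeping of the zero polynomial. It is worth emphasizing why the hypothesis $l=0$ cannot be dropped: it is exactly the additivity of $\tau$-degrees that fails to preserve $\mathcal{R}_{l}$ for $l\neq 0$. For $n$ even and $l=n/2$, the product of two elements of $\tau$-degree $n/2$ has $\tau$-degree $n/2+n/2\equiv 0\pmod n$, so $PQ$ leaves $\mathcal{R}_{n/2}$ and lands in $\mathcal{R}_{0}$; hence $\mathcal{H}$ is then closed only under linear combinations, not under multiplication.
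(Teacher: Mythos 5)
Your proof is correct. The paper offers no argument for this proposition (it is introduced with ``One can easily check the following result''), so there is nothing to diverge from; your reading of the two conditions as $\mathcal H=\mathcal R_l\cap\sigma^{-1}(\mathcal R_l)$ with $\mathcal R_l=\ker(\tau-\varepsilon^l\,\mathrm{id})$, followed by additivity of $\tau$-degrees under multiplication because $\sigma$ and $\tau$ are algebra automorphisms, is surely the intended one. You are also right to flag the one delicate point: under the paper's literal definition of the $\tau$-degree of a polynomial as the \emph{highest} $\tau$-degree of its monomials, the condition $\tau$-$\varpi(Q)=l$ would not define a subspace, so the eigenvector ($\tau$-homogeneous) interpretation you adopt is forced, and it is consistent with the paper's standing assumption that the Casimirs are homogeneous in the sense of $\tau$-degree; your remark on $l=n/2$ correctly accounts for the restriction of the subalgebra claim to $l=0$.
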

We endow $\mathcal H$ with the usual grading of the polynomial algebra $\mathcal R.$
For $Q$ an element of $\mathcal R,$ we denote by $\varpi(Q)$ its usual weight degree.
We denote by $\mathcal H_i$ the homogeneous subspace of $\mathcal H$ of degree $i.$
\begin{prop}\label{condn}
If $n$ is not a divisor of $i$ (in other words $i\neq nm$ then $\mathcal H_i=0.$
\end{prop}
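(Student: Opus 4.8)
The plan is to read off the statement directly from the definition of $\mathcal H$ via the $\tau$-degree, by tracking how the operator $\sigma$ shifts the $\tau$-degree of a polynomial of fixed usual degree. Everything rests on a single computational fact, valid for an arbitrary monomial $M=x_0^{\alpha_0}x_1^{\alpha_1}\cdots x_{n-1}^{\alpha_{n-1}}$:
\begin{equation*}
\tau\varpi(\sigma\cdot M)\equiv\tau\varpi(M)+\varpi(M)\pmod n .
\end{equation*}
First I would establish this congruence. By the explicit description of the $\sigma$-action, $\sigma\cdot M$ is the monomial whose exponent of $x_j$ equals $\alpha_{j-1}$ (indices mod $n$), so that $\tau\varpi(\sigma\cdot M)=\sum_j j\,\alpha_{j-1}$. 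Reindexing by $k=j-1$ turns this into $\sum_k(k+1)\alpha_k=\sum_k k\alpha_k+\sum_k\alpha_k$, whose two summands are exactly $\tau\varpi(M)$ and $\varpi(M)$; the only discrepancy is the wrap-around contribution $n\alpha_{n-1}$, which vanishes modulo $n$. This is nothing but the $\tau$-degree incarnation of the commutation rule in Proposition \ref{Hder}.

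Next I would fix a nonzero $Q\in\mathcal H_i$ and use that the elements of $\mathcal H$ are $\tau$-homogeneous: every monomial $M$ occurring in $Q$ has $\tau\varpi(M)=l$, and all of them share the usual degree $\varpi(M)=i$. Applying the congruence monomial by monomial, every monomial of $\sigma\cdot Q$ acquires $\tau$-degree congruent to $l+i$ modulo $n$, whence $\tau\varpi(\sigma\cdot Q)\equiv l+i\pmod n$. Since membership in $\mathcal H$ also forces $\tau\varpi(\sigma\cdot Q)=l$, we get $l\equiv l+i\pmod n$, i.e.\ $i\equiv0\pmod n$. The contrapositive is precisely the assertion $\mathcal H_i=0$ whenever $n\nmid i$. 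This is the $n$-dimensional analogue of the degree count in Proposition \ref{cond3}, with the shift congruence above playing the role that Proposition \ref{degcond} played in dimension $3$.

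The step demanding the most care is not the algebra but the correct reading of $\tau\varpi$ on the class $\mathcal H$: the argument genuinely uses that each element of $\mathcal H$ is $\tau$-homogeneous, i.e.\ all its monomials carry the \emph{same} $\tau$-degree $l$, rather than merely having top $\tau$-degree $l$. Indeed, if one interpreted $\tau\varpi$ solely as the maximal $\tau$-degree of a monomial, then for $n=4$ (so $l=2$) the polynomial $x_0^2+x_0x_2$ would satisfy $\tau\varpi=2$ and $\tau\varpi(\sigma\cdot Q)=2$ while having usual degree $2$, violating the statement; so it is the homogeneity encoded in the definition of $\mathcal H$, together with the uniform shift by $i$, that makes the conclusion valid. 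Beyond this point, I expect only the modular bookkeeping in deriving the shift congruence to require attention.
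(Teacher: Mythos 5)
Your proof is correct and follows essentially the same route as the paper: the paper likewise writes each degree-$i$ monomial of $Q$ with index sum $\equiv l \pmod n$, observes that $\sigma$ shifts every index by $1$ so the $\tau$-degree becomes $l+i$, and concludes $i\equiv 0\pmod n$ from $\tau\varpi(\sigma\cdot Q)=l$. Your explicit remark that the argument needs $\tau$-homogeneity (all monomials of $Q$ having $\tau$-degree exactly $l$, not merely top $\tau$-degree $l$) is a sound clarification of a hypothesis the paper uses implicitly when it writes $Q$ in that form.
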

\begin{proof}
It is clear the $\mathcal H_0=\mathbb C.$ We suppose now $i\neq 0.$ Let $Q\in\mathcal H_i,$ $Q\neq 0.$ Then
$$Q=\displaystyle{\sum_{k_{1},\cdots, k_{i-1}}} a_{k_{1},\cdots, k_{i-1}}x_{k_1}\cdots x_{k_{i-1}}x_{l-k_{1}-\cdots-k_{i-1}}.$$
Hence
$$\sigma\cdot Q=\displaystyle{\sum_{k_{1},\cdots, k_{i-1}}} a_{k_{1},\cdots, k_{i-1}}x_{k_1+1}\cdots x_{k_{i-1}+1}x_{l-k_{1}-\cdots-k_{i-1}+1}.$$
Since $\tau$-$\varpi(\sigma\cdot Q)=\tau$-$\varpi(Q)=l,$ $i\equiv 0 \ modulo \ n.$
\end{proof}
Set $Q=\sum\beta x_0^{\alpha_0}x_1^{\alpha_1}\cdots x_{n-1}^{\alpha_{n-1}}.$  We suppose that $\varpi(Q)=n(1+s).$
We want to find all $\alpha_0, \alpha_1,\cdots, \alpha_{n-1}$ such that $Q\in\mathcal H$ and therefore the dimension $\mathcal H_{3(1+s)}$ as $\mathbb C$-vector space.
\begin{prop}
There exist $s_0, s_1\cdots, s_{n-1}$ such that
\begin{equation}\label{hjpsn}
\left\{\begin{array}{lcl}
\alpha_0+\alpha_1+\cdots+\alpha_{n-1}&=&n(1+s)\\
0\alpha_0+\alpha_1+2\alpha_2+\cdots(n-1)\alpha_{n-1}&=&l+ns_0\\
1\alpha_0+2\alpha_1+3\alpha_2+\cdots (n-1)\alpha_{n-2}+0\alpha_{n-1}&=&l+ns_1\\
&.&\\
&.&\\
&.&\\
(n-2)\alpha_0+(n-1)\alpha_1+0\alpha_2+\cdots (n-3)\alpha_{n-4}+(n-3)\alpha_{n-1}&=&l+ns_{n-2}\\
(n-1)\alpha_0+0\alpha_1+1\alpha_2+\cdots (n-3)\alpha_{n-3}+(n-2)\alpha_{n-1}&=&l+ns_{n-1}
\end{array}\right.
\end{equation}
\end{prop}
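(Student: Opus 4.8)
The plan is to reduce the statement to a single monomial and then to read each line of the system (\ref{hjpsn}) as a $\tau$-degree condition imposed on a $\sigma$-shift of that monomial. Since $\mathcal H$ is spanned by the monomials it contains, it suffices to fix one monomial $M=x_0^{\alpha_0}x_1^{\alpha_1}\cdots x_{n-1}^{\alpha_{n-1}}$ with $\varpi(M)=\alpha_0+\cdots+\alpha_{n-1}=n(1+s)$ and $M\in\mathcal H$, and to exhibit integers $s_0,\dots,s_{n-1}$ turning every line of (\ref{hjpsn}) into an equality. The very first line is nothing but the hypothesis $\varpi(M)=n(1+s)$, so only the remaining $n$ lines, those carrying $s_0,\dots,s_{n-1}$, require genuine work.

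First I would identify the left-hand side of the line carrying $s_k$. By the definition of the $\sigma$-action, $\sigma^k$ sends the exponent $\alpha_j$ of $x_j$ to the exponent of $x_{j+k}$ (indices read mod $n$); hence the weight of $x_j$ contributing to $\tau$-$\varpi(\sigma^k\cdot M)$ is $(j+k)\bmod n$. A direct inspection shows that this is exactly the coefficient of $\alpha_j$ on the $(k+1)$-th line of (\ref{hjpsn}), so that line reads
$$\tau\text{-}\varpi(\sigma^k\cdot M)=l+ns_k.$$
Thus the system is precisely the degree equation together with the $n$ equations $\tau\text{-}\varpi(\sigma^k\cdot M)=l+ns_k$, $k=0,\dots,n-1$.

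The key step is then to verify that each such left-hand side is congruent to $l$ modulo $n$, so that the required quantity $s_k=\bigl(\tau\text{-}\varpi(\sigma^k\cdot M)-l\bigr)/n$ is actually an integer. For this I would generalize the shift computation already used in the proof of Proposition \ref{condn}: dropping the reduction mod $n$ in each coefficient costs only a multiple of $n$, whence
$$\tau\text{-}\varpi(\sigma^k\cdot M)\equiv\sum_{j=0}^{n-1}(j+k)\alpha_j=\tau\text{-}\varpi(M)+k\,\varpi(M)\pmod n.$$
Because $M\in\mathcal H$ gives $\tau\text{-}\varpi(M)\equiv l$ and $\varpi(M)=n(1+s)\equiv 0\pmod n$, the extra term $k\,\varpi(M)$ vanishes modulo $n$ and every $\tau\text{-}\varpi(\sigma^k\cdot M)$ is congruent to $l$. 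This supplies the integers $s_0,\dots,s_{n-1}$ and closes the argument.

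I expect the only genuine subtlety to be the mod-$n$ wrap-around in the coefficients $(j+k)\bmod n$: a priori the passage from $(j+k)\bmod n$ to $j+k$ alters the left-hand side, but the alteration is always an integer multiple of $n$ and so is invisible modulo $n$. It is precisely the divisibility $n\mid\varpi(M)$ forced by Proposition \ref{condn} that makes the correction term $k\,\varpi(M)$ disappear as well; without it the higher shifts $\sigma^k$ would impose genuinely new congruences instead of being automatic consequences of the two defining conditions on $\mathcal H$.
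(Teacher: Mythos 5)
Your proof is correct and follows essentially the same route as the paper, whose own argument is just the one-line remark that the system is a direct consequence of $\tau$-$\varpi(\sigma\cdot Q)=\tau$-$\varpi(Q)=l$; you simply make explicit the identification of the $(k+1)$-th line with $\tau$-$\varpi(\sigma^k\cdot Q)\equiv l \pmod n$ and the congruence $\tau$-$\varpi(\sigma^k\cdot M)\equiv\tau$-$\varpi(M)+k\,\varpi(M)$ that makes the higher shifts automatic. Nothing further is needed.
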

\begin{proof}
That is the direct consequence of the fact that $\tau$-$\varpi(\sigma\cdot Q)=\tau$-$\varpi(Q)=l.$
\end{proof}
One can easily obtain the following result:
\begin{prop}
The system equation (\ref{hjpsn}) has as solution:
\begin{equation}
\alpha_i=s_{n-i-1}-s_{n-i}+r, \ \ \ i\in\mathbb Z/n\mathbb Z
\end{equation}
where $r=s+1$ and the $s_0,\cdots s_{n-1}$ satisfy the condition
\begin{equation}
s_0+s_1+\cdots s_{n-1}=\frac{(n-1)n}{2}r-l.
\end{equation}
\end{prop}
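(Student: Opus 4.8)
The plan is to treat the last $n$ equations of the system (\ref{hjpsn}) in a uniform way and then eliminate the unknowns $\alpha_i$ by taking successive differences. First I would observe that, writing the residue $(i+k)\bmod n\in\{0,1,\dots,n-1\}$, the equation of (\ref{hjpsn}) whose right-hand side is $l+ns_k$ reads
\begin{equation*}
\sum_{i=0}^{n-1}\big((i+k)\bmod n\big)\,\alpha_i=l+ns_k,\qquad k\in\mathbb Z/n\mathbb Z,
\end{equation*}
while the first line of (\ref{hjpsn}) is the total-degree constraint $\sum_{i=0}^{n-1}\alpha_i=nr$ with $r=1+s$.

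The key step is to subtract consecutive equations. Passing from $k$ to $k+1$ raises every coefficient $(i+k)\bmod n$ by $1$, except at the single index $i=n-1-k$, where the coefficient drops from $n-1$ to $0$. Hence the difference of the two equations is
\begin{equation*}
\sum_{i=0}^{n-1}\alpha_i-n\,\alpha_{n-1-k}=n(s_{k+1}-s_k),
\end{equation*}
and substituting $\sum_i\alpha_i=nr$ gives $\alpha_{n-1-k}=r+s_k-s_{k+1}$. Reindexing by $i=n-1-k$ (so that $k=n-1-i$ and $k+1=n-i$) yields exactly $\alpha_i=s_{n-i-1}-s_{n-i}+r$ for all $i\in\mathbb Z/n\mathbb Z$, which is the claimed solution.

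It then remains to identify the compatibility condition. I would first check that the formula is automatically consistent with the total-degree equation: summing $\alpha_i=r+s_{n-1-i}-s_{n-i}$ over $i$ telescopes the $s$-terms to zero and returns $\sum_i\alpha_i=nr$, so that line imposes nothing new. The only surviving scalar relation comes from feeding the explicit $\alpha_i$ back into any one of the original equations, say the one indexed by $k=0$, namely $\sum_i i\,\alpha_i=l+ns_0$. A short computation, reindexing the sums $\sum_i i\,s_{n-1-i}$ and $\sum_i i\,s_{n-i}$ and using $\sum_{i=0}^{n-1}i=\frac{n(n-1)}{2}$, collapses the $s_0$-dependence and leaves precisely $\sum_{k=0}^{n-1}s_k=\frac{(n-1)n}{2}r-l$. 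The same relation is obtained more symmetrically by summing all $n$ equations at once, since for each fixed $i$ the coefficients $(i+k)\bmod n$ run through $0,\dots,n-1$ as $k$ varies, contributing $\frac{n(n-1)}{2}\sum_i\alpha_i=\frac{n(n-1)}{2}nr$ on the left.

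The one place demanding care, the main (if modest) obstacle, is the mod-$n$ bookkeeping of the coefficients $(i+k)\bmod n$: one must correctly locate the single wrap-around index $i=n-1-k$ responsible for the $-n$ term in each difference, and track the cyclic shifts $s_{n-i-1}$, $s_{n-i}$ when reindexing. Once this is handled, the $n$ difference equations are manifestly consistent, since both sides sum to $0$ over $k$, so they determine the $\alpha_i$ up to the single global constraint above, completing the proof.
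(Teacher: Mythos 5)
Your proof is correct, and the paper in fact states this proposition with no proof at all (``One can easily obtain the following result''), so there is nothing to diverge from: your elimination by consecutive differences, locating the single wrap-around index $i=n-1-k$ to get $\alpha_{n-1-k}=r+s_k-s_{k+1}$, and then checking that the only surviving relation is $\sum_k s_k=\frac{(n-1)n}{2}r-l$, is exactly the routine verification the authors leave to the reader.
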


Therefore $\alpha_0, \alpha_1,\cdots \alpha_{n-1}$ are completely determined by the set of nonnegative integer sequences $(s_0, s_1,\cdots, s_{n-1})$ satisfying the constraints
\begin{equation}\label{cons0}
c_i : s_{n-i-1}-s_{n-i}+r\geq 0, \ \ \ i\in\mathbb Z/n\mathbb Z
\end{equation}
and such that
\begin{equation}\label{cons}
s_0+s_1+\cdots+s_{n-1}=\frac{(n-1)n}{2}r-l.
\end{equation}
There are two approaches to determine the dimension of $\mathcal H_{nr}.$\\
The first one is exactly as in the case of dimension 3. The constraint (\ref{cons}) is equivalent to say that:
$$s_{n-1}=-(s_0+s_1+\cdots+s_{n-2})+\frac{(n-1)n}{2}r-l.$$
Therefore, by replacing $s_{n-1}$ by this value, $\alpha_0,\cdots, \alpha_{n-1}$ are completely determined by the set of nonnegative integer sequences $(s_0, s_1,\cdots, s_{n-2})$ satisfying the constraints:
\begin{equation}\label{cons1}
\left\{\begin{array}{ll}
c_{-1}' :&s_0+s_1+\cdots+s_{n-3}+s_{n-2}\leq \frac{(n-1)n}{2}r-l\\
c_0' : &2s_0+s_1+\cdots+s_{n-3}+s_{n-2}\leq \left[\frac{(n-1)n}{2}+1\right]r-l\\
c_1' :&s_0+s_1+\cdots+s_{n-3}+2+s_{n-2}\geq \left[\frac{(n-1)n}{2}k+1\right]r-l\\
c'_i :& s_{n-i-1}-s_{n-i}+r\geq 0, \ \ \ i\in\mathbb Z/n\mathbb Z\setminus\{0, 1\}.
\end{array}\right.
\end{equation}
Hence, the dimension $\mathcal H_{nr}$ is just the number of nonnegative integer points contain in the polytope given by the system (\ref{cons1}), where $r=s+1.$\\
In dimension 3, ones obtain the triangle in $\mathbb R^2$ given by the vertices $A(0, 2r),$ $B(r, 2r),$ $C(2r, 0)$ (see section 2).\\ \mbox{} \\
In dimension 4, we get the following polytope: \newpage
\begin{figure}[h!]
\centering
\includegraphics[width=8cm, height=9cm]{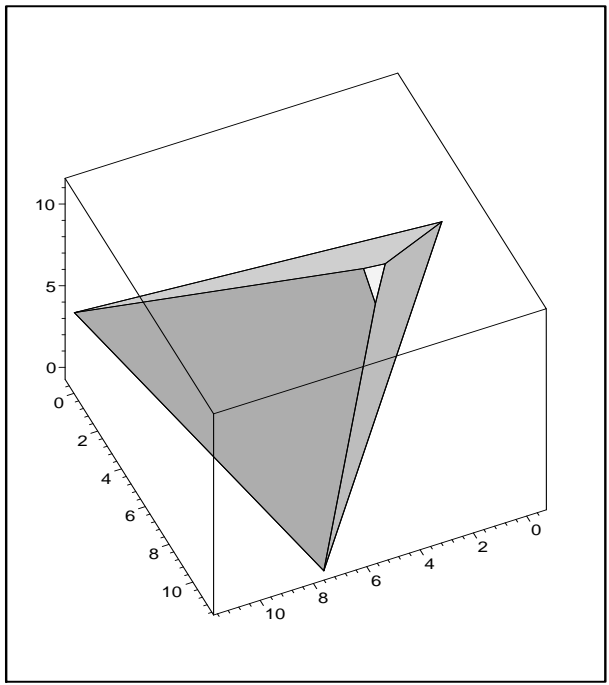}
\caption{An example of the polytope ${T}_4$ in the case $r=4$. The vertices are in
$
(-\frac{1}{2},r-\frac{1}{2},2r-\frac{1}{2}),
(r-\frac{2}{3},2r-\frac{2}{3},3r-\frac{2}{3}),
(r,2r-1,3r-1),
(r,2r,3r-2),
(2r-\frac{1}{2},3r-\frac{1}{2},-\frac{1}{2}),
(3r-\frac{1}{2},-\frac{1}{2},r-\frac{1}{2}).
$}\label{fig:Penta3D}
\end{figure}
For the second method, one can observe that the dimension of $\mathcal H_{nr)}$ is nothing else that the cardinality of the set $S_C$ of all compositions $(s_0,\cdots, s_{n-1})$ of $N=\frac{(n-1)n}{2}r-l$ subjected to the constraints (\ref{cons0}). Therefore if $S_C$ is the set of all nonnegative integers $(s_0,\cdots, s_{n-1})$ satisfying the constraints (\ref{cons0}) and $F_{C}$ is the associated generating function, then the dimension of $\mathcal H_{nr}$ is the coefficient of  $q^N$ in $F_C(q, q, \cdots, q).$ The set $S_C$ consists of all nonnegative integers points contained in the polytope of $\mathbb R^n$:
\begin{equation}
\begin{array}{ll}
\mathcal P_n :& \begin{array}{ll}
 x_{n-i-1}-x_{n-i}+r\geq 0, & i\in\mathbb Z/n\mathbb Z\\
  x_i\geq 0, & i\in\mathbb Z/n\mathbb Z
\end{array}
\end{array}
\end{equation} \newpage
\begin{figure}[h!]
\centering
\includegraphics[width=13cm, height=14cm]{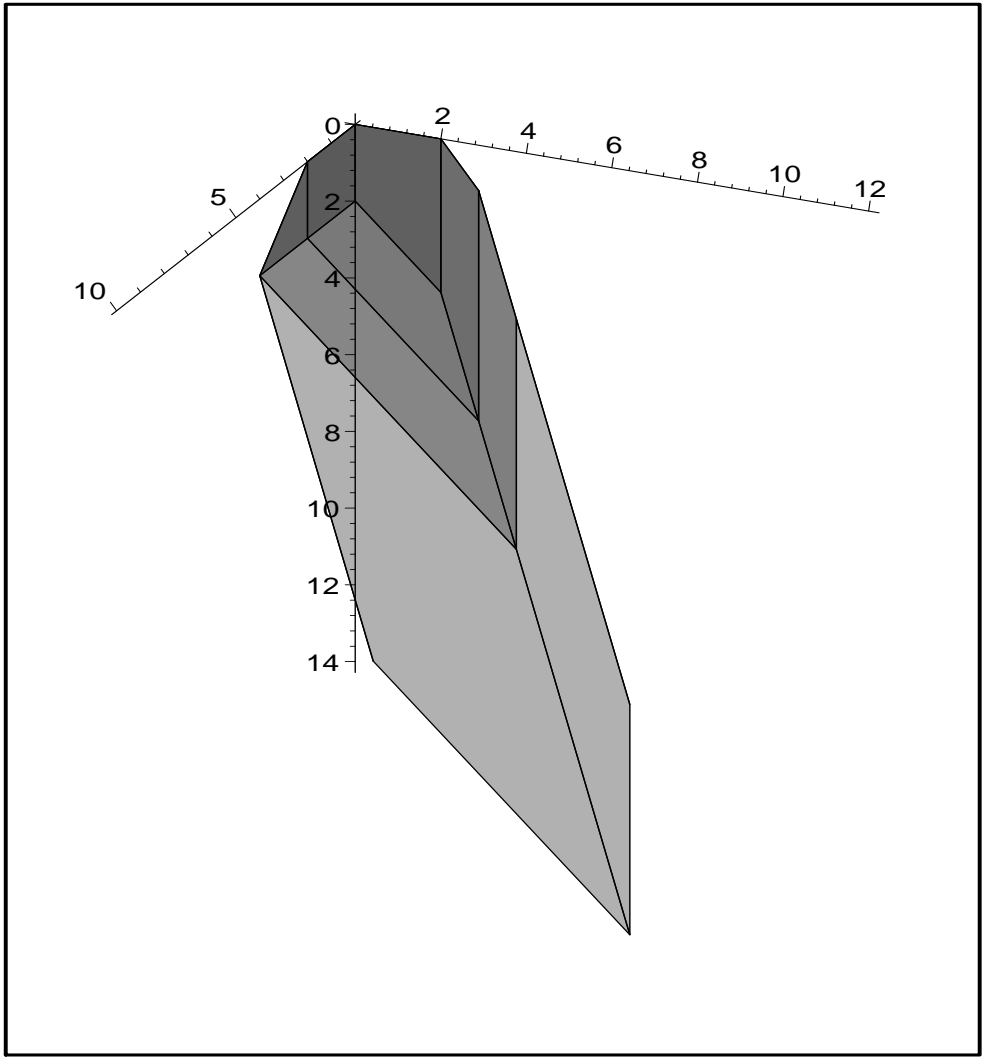}
\caption{An example of the polytope ${\mathcal P}_3$ in the case $r=2$.
}\label{fig:INF3D}
\end{figure}
\subsubsection*{Acknowledgements}
Authors are grateful to M. Beck and T.Schedler for useful and illuminative discussions. This  work has begun when G.O. was in visiting Mathematics Research Unit at Luxembourg. G.O. thanks this institute for the invitation and for the kind hospitality. A Part of this work has been done when S.P: was in visiting Max Planck Institute at Bonn. S.P. thanks this institute for the invitation and for good working conditions. S.P. have been partially financed by ``Fond National de Recherche (Luxembourg)''. He is thankful to LAREMA for a kind invitation and a support during his stay in Angers. V.R. was partially supported by the French National Research Agency (ANR) Grant 2011 DIADEMS and by franco-ukrainian PICS (CNRS-NAS) in Mathematical Physics. He is grateful to MATPYL project for a support of T.Schedler visit in Angers and to the University of Luxembourg for a support of his visit to Luxembourg.

\noindent

\end{document}